\DeclareFontFamily{OML}{cyi}{} \DeclareFontShape{OML}{cyi}{m}{n}{
  <5> <6> <7> <8> <9> gen * wncyi
  <10> <10.95> <12> <14.4> <17.28> <20.74> <24.88> wncyi10
 }{}
\DeclareSymbolFont{rusletters}{OML}{cyi}{m}{n}
\DeclareSymbolFontAlphabet{\rusmath}{rusletters}
\DeclareMathSymbol\re{\rusmath}{rusletters}{"03}
\renewcommand*{\d}{\mathinner{\!}d}
\def\cprime{\/{\mathsurround=0pt$'$}}
\DeclareMathOperator{\sym}{sym}
\DeclareMathOperator{\cosym}{cosym}
\DeclareMathOperator{\Hom}{Hom}
\newcommand*{\eval}[1]{\left.#1\right|}
\newcommand*{\abs}[1]{\left|#1\right|}
\newcommand{\ldb}{[\![}
\newcommand{\rdb}{]\!]}
\newcommand*{\pd}[2]{\frac{\partial#1}{\partial#2}}
\let\kappa=\varkappa
\let\phi=\varphi
\journalname{Acta Applicandae Mathematicae}
\begin{document}


\title{On integrability of the Camassa--Holm equation and its
  invariants\thanks{This work was supported in part by the NWO--RFBR grant
    047.017.015 and RFBR--Consortium E.I.N.S.T.E.I.N.\ grant 06-01-92060.}  }
\subtitle{A geometrical approach}

\titlerunning{Integrability of the Camassa--Holm equation}

\author{V.~Golovko \and P.~Kersten \and \\
I.~Krasil{\cprime}shchik \and A.~Verbovetsky }

\authorrunning{V.~Golovko et al.}

\institute{Valentina Golovko \at Lomonosov MSU, Faculty of Physics, Department
  of Mathematics,
  Vorob{\cprime}evy Hills, Moscow 119902, Russia\\
  \email{golovko@mccme.ru} 
  \and Paul Kersten \at University of Twente, Postbus 217, 7500 AE Enschede,
  the Netherlands\\
  \email{kerstenphm@ewi.utwente.nl} 
  \and Iosif Krasil{\cprime}shchik \at Independent University of Moscow,
  B. Vlasevsky~11,
  119002 Moscow,  Russia\\
  \email{josephk@diffiety.ac.ru} 
  \and Alexander Verbovetsky \at Independent University of Moscow,
  B. Vlasevsky~11,
  119002 Moscow,  Russia\\
  \email{verbovet@mccme.ru} }

\date{\ }

\maketitle

\begin{abstract}
  Using geometrical approach exposed in Refs.~\cite{K-K-V-lstar,K-K-V-dB}, we
  explore the Camassa--Holm equation (both in its initial scalar form, and in
  the form of $2\times2$-system). We describe Hamiltonian and symplectic
  structures, recursion operators and infinite series of symmetries and
  conservation laws (local and nonlocal).  \keywords{Camassa--Holm equation
    \and Integrability \and Hamiltonian structures \and Symplectic structures
    \and Recursion operators \and Symmetries \and Conservation laws \and
    Geometrical approach} \PACS{02.30.Ik \and 11.30.-j} \subclass{37K05 \and
    35Q53}
\end{abstract}

\section{Introduction}
\label{sec:introduction}
The Camassa--Holm equation was introduced in~\cite{C-H} in the form
\begin{equation}
  \label{eq:1}
  u_t+\mu u_x-u_{xxt}+3uu_x=2u_xu_{xx}+uu_{xxx},\qquad\mu\in\mathbb{R},
\end{equation}
and was intensively explored afterwards (see, for example,
Refs.~\cite{C-L-O-P,C-I,F-S,L}). Its superizations were also constructed,
see~\cite{A-G-Z,P}. Since~\eqref{eq:1} is not an evolution equation, its
integrability properties (existence and even definition of Hamiltonian
structures, conservation laws, etc.) are not standard to establish.

One of the ways widely used to overcome this difficulty is to introduce a new
unknown~$m=u-u_{xx}$ and transform Eq.~\eqref{eq:1} to the system
\begin{equation}
  \label{eq:2}
  \begin{cases}
    m_t=-um_x-(2m+\mu)u_x,\\
    m=u-u_{xx},
  \end{cases}
\end{equation}
which has \emph{almost} evolutionary form. We stress this ``almost'', because
the second equation in~\eqref{eq:2} (that can be considered as a constrain to
the first one) disrupts the picture and, at best, necessitates to invert the
operator~$1-D_x^2$. At worst, dealing with Eq.~\eqref{eq:2} as with an
evolution equation may lead to fallacious results.

In our approach based on the geometrical framework exposed in
Ref.~\cite{B-Ch-D-AMS}, we treat the equation at hand as a submanifold in the
manifold of infinite jets and consider two natural extensions of this
equation, cf.~with Ref~\cite{K}. The first one is called the
\emph{$\ell$}-covering and serves the role of the tangent bundle. The second
extension, $\ell^*$-covering, is the counterpart to the cotangent bundle. The
key property of these extensions is that the spaces of their nonlocal (in the
sense of~\cite{K-V}) symmetries and cosymmetries contain all essential
integrability invariants of the initial equation. The efficiency of the method
was tested for a number of problems (see
Refs.~\cite{K-K-V-lstar,K-K-V-dB,K-K-V-sup}) and we apply it to the
Camassa--Holm equation here.

In Section~\ref{sec:underlying-theory} we briefly expose the necessary
definition and facts. Section~\ref{sec:matrix-version} contains computations
for the Camassa--Holm equation in its matrix version (computations and results
are more compact in this representation), while in
Section~\ref{sec:scalar-version} we reformulate them for the original
form~\eqref{eq:1} and compare later the results obtained for the two
alternative presentations. Finally, Section~\ref{sec:conclusion} contains
discussion of the results obtained. Throughout our exposition we use a very
stimulating conceptual parallel between categories of smooth manifolds and
differential equations proposed initially by A.M.~Vinogradov and in its modern
form presented in Table~\ref{tab:1}.

\begin{table}[h]
\caption{Conceptual parallel between two categories}
\label{tab:1}
\begin{tabular}{l|l}
\hline\noalign{\smallskip}
\textbf{Manifolds} & \textbf{Equation}s\\
\noalign{\smallskip}\hline\noalign{\smallskip}
Smooth manifold & Infinitely prolonged equation\\
Point & Formal solution\\
Smooth function & Conservation law\\
Vector field & Higher symmetry\\
Differential $1$-form & Cosymmetry\\
de~Rham complex & $(n-1)$st line of Vinogradov's $\mathcal{C}$-spectral sequence\\
Tangent bundle & $\ell$-covering \\
Cotangent bundle & $\ell^*$-covering \\
\noalign{\smallskip}\hline
\end{tabular}
\end{table}
\noindent

This table is not just a toy dictionary but a quite helpful tool to formulate
important definitions and results. For example, a bivector on a smooth
manifold~$M$ may be understood as a derivation of the ring~$C^\infty(M)$ with
values in~$C^\infty(T^*M)$. Translating this statement to the language of
differential equations we come to the definition of variational bivectors and
their description as shadows of symmetries in the $\ell^*$-covering (see
Theorem~\ref{thm:ellstar-shad} below). Another example: any vector field
(differential $1$-form) on~$M$ may be treated as a function on~$T^*M$
(on~$TM$). Hence, to any symmetry (cosymmetry) there corresponds a
conservation law on the space of the $\ell^*$-covering ($\ell$-covering). This
leads to the notions of nonlocal vectors and forms that, in turn, provide a
basis to construct weakly nonlocal structures (see
Subsections~\ref{sec:nonlocal-forms-1} and~\ref{sec:nonlocal-vectors-1}). Of
course, these parallels are not completely straightforward (in technical
aspects, especially), but extremely enlightening and fruitful.

The idea of this paper arose in the discussions one of the authors had with
Volodya Roubtsov in 2007. We agreed to write two parallel texts on
integrability of the Camassa--Holm equation that reflect our viewpoints. The
reader can now compare our results with the ones presented in~\cite{C-O-R}.

\section{Underlying theory}
\label{sec:underlying-theory}

We present here a concise exposition of the theoretical background used in
the subsequent sections, see Refs.~\cite{B-Ch-D-AMS,K-K-V-dB,K-V}.

\subsection{Equations, symmetries, etc.}
\label{sec:equat-symm-etc}

Let~$\pi\colon E\to M$ be a fiber bundle and~$\pi_\infty\colon
J^\infty(\pi)\to M$ be the bundle of its infinite jets. To simplify our
exposition we shall assume that~$\pi$ is a vector bundle. In all applications
below~$\pi$ is the trivial
bundle~$\mathbb{R}^m\times\mathbb{R}^n\to\mathbb{R}^n$. We consider
\emph{infinite prolongations} of differential equations as
submanifolds~$\mathcal{E}\subset J^\infty(\pi)$ and retain the
notation~$\pi_\infty$ for the
restriction~$\eval{\pi_\infty}_{\mathcal{E}}$. Any such a manifold is endowed
with the \emph{Cartan distribution} which spans at every point tangent spaces
to the graphs of jets. A \emph{symmetry} of~$\mathcal{E}$ is a vector field
that preserves this distribution. The set of symmetries is a Lie algebra
over~$\mathbb{R}$ denoted by~$\sym\mathcal{E}$.

For any equation~$\mathcal{E}$ its \emph{linearization operator}
$\ell_{\mathcal{E}}\colon\kappa\to P$ is defined, where~$\kappa$ is the
module\footnote{All the modules below are modules over the ring~$\mathcal{F}$
  of smooth functions on~$\mathcal{E}$.}
of sections of the pullback~$\pi_\infty(\pi)$ and~$P$ is the module of
sections of some vector bundle over~$\mathcal{E}$. Then~$\sym\mathcal{E}$ can
be identified with solutions of the equation
\begin{equation}
  \label{eq:3}
  \ell_{\mathcal{E}}(\phi)=0,\qquad\phi\in\kappa.
\end{equation}
For two symmetries~$\phi_1$, $\phi_2\in\sym\mathcal{E}$ their commutator is
denoted by~$\{\phi_1,\phi_2\}$.

Denote by~$\Lambda_h^i$ the module of horizontal $i$-forms on~$\mathcal{E}$
and introduce the notation
\begin{equation*}
  \hat{Q}=\Hom_{\mathcal{F}}(Q,\Lambda_h^n),\qquad n=\dim M,
\end{equation*}
for any module~$Q$. The \emph{adjoint} to~$\ell_{\mathcal{E}}$ operator
\begin{equation}
  \label{eq:4}
  \ell_{\mathcal{E}}^*\colon\hat{P}\to\hat{\kappa}
\end{equation}
arises and solutions of the equation
\begin{equation}
  \label{eq:5}
  \ell_{\mathcal{E}}^*(\psi)=0,\qquad\psi\in\hat{P},
\end{equation}
are called \emph{cosymmetries} of~$\mathcal{E}$; the space of cosymmetries is
denoted by~$\cosym\mathcal{E}$.

Let~$d_h\colon\Lambda_h^i\to\Lambda_h^{i+1}$ be the \emph{horizontal de~Rham
  differential}. A \emph{conservation law} of the equation~$\mathcal{E}$ is a
closed form~$\omega\in\Lambda_h^{n-1}$. To any conservation law there
corresponds its \emph{generating function}~$\delta\omega\in\cosym\mathcal{E}$,
where~$\delta\colon E_1^{0,n-1}\to E_1^{1,n-1}$ is the differential in the
$E_1$ term of Vinogradov's $\mathcal{C}$-spectral sequence, see~\cite{V}. In
the evolutionary case~$\delta$ coincides with the Euler--Lagrange operator. A
conservation law is \emph{trivial} if its generating function vanishes. In
particular, $d_h$-exact conservation laws are trivial.

A vector field on~$\mathcal{E}$ is called a \emph{$\mathcal{C}$-field} if it
lies in the Cartan distribution. A differential operator~$\Delta\colon P\to
Q$, $P$ and~$Q$ being $\mathcal{F}$-modules, is called a
\emph{$\mathcal{C}$-differential operator} if it is locally expressed in terms
of $\mathcal{C}$-fields. For example,~$\ell_{\mathcal{E}}$ is a
$\mathcal{C}$-differential operator.

A $\mathcal{C}$-differential operator~$\tens{H}\colon\hat{P}\to\kappa$ is said
to be a \emph{variational bivector} on~$\mathcal{E}$ if
\begin{equation}
  \label{eq:6}
  \ell_{\mathcal{E}}\circ\tens{H}=\tens{H}^*\circ\ell_{\mathcal{E}}^*.
\end{equation}
This condition means that~$\tens{H}$ takes cosymmetries of~$\mathcal{E}$ to
symmetries. If~$\mathcal{E}$ is an evolution Eq.~\eqref{eq:6} implies also
that~$\tens{H}^*=-\tens{H}$. A bivector~$\tens{H}$ is a \emph{Hamiltonian
  structure}\footnote{It is more appropriate to call these objects
  \emph{Poisson structures}, but we follow the tradition accepted in the
  theory of integrable systems.} on~$\mathcal{E}$
if~$\ldb\tens{H},\tens{H}\rdb^{\mathrm{s}}=0$,
where~$\ldb\cdot,\cdot\rdb^{\mathrm{s}}$ is the \emph{variational Schouten
  bracket} (see also~\cite{I-V-V}). Two Hamiltonian structures are
\emph{compatible} if~$\ldb\tens{H}_1,\tens{H}_2\rdb^{\mathrm{s}}=0$.

A $\mathcal{C}$-differential operator~$\tens{S}\colon\kappa\to\hat{P}$ is
called a \emph{variational $2$-form on~$\mathcal{E}$} if
\begin{equation}
  \label{eq:7}
  \ell_{\mathcal{E}}^*\circ\tens{S}=\tens{S}^*\circ\ell_{\mathcal{E}}.
\end{equation}
Such operators take symmetries to cosymmetries and in evolutionary case are
skew-adjoint. They are elements of the term~$E_1^{2,n-1}$ of Vinogradov's
$\mathcal{C}$-spectral sequence. A variational form is a \emph{symplectic
  structure} on the equation~$\mathcal{E}$ if it is \emph{variationally
  closed}, i.e.,~$\delta\tens{S}=0$, where~$\delta\colon E_1^{2,n-1}\to
E_1^{3,n-1}$ is the corresponding differential.

We shall also consider \emph{recursion $\mathcal{C}$-differential
  operators}~$\tens{R}\colon\kappa\to\kappa$
and~$\hat{\tens{R}}\colon\hat{P}\to\hat{P}$ satisfying the conditions
\begin{equation}
  \label{eq:8}
  \ell_{\mathcal{E}}\circ\tens{R}=\tens{R}'\circ\ell_{\mathcal{E}},\qquad
  \ell_{\mathcal{E}}^*\circ\hat{\tens{R}}=\hat{\tens{R}}'\circ\ell_{\mathcal{E}}^*
\end{equation}
for some $\mathcal{C}$-differential operators~$\tens{R}'\colon P\to P$
and~$\hat{\tens{R}}'\colon\hat{\kappa}\to\hat{\kappa}$. An operator~$\tens{R}$
satisfies the \emph{Nijenhuis condition}
if~$\ldb\tens{R},\tens{R}\rdb^{\mathrm{n}}=0$,
where~$\ldb\cdot,\cdot\rdb^{\mathrm{n}}$ is the \emph{variational Nijenhuis
  bracket}. A recursion operator~$\tens{R}$ is \emph{compatible} with a
Hamiltonian structure~$\tens{H}$ is the composition~$\tens{R}\circ\tens{H}$ is
a Hamiltonian structure as well.

\subsection{Nonlocal theory}
\label{sec:nonlocal-theory}

Let~$\mathcal{E}$ and~$\tilde{\mathcal{E}}$ be equations
and~$\xi\colon\tilde{\mathcal{E}}\to\mathcal{E}$ be a fiber bundle. Denote
by~$\mathcal{C}$ and~$\tilde{\mathcal{C}}$ the Cartan distributions
on~$\mathcal{E}$ and~$\tilde{\mathcal{E}}$, resp. We say that~$\xi$ is a
\emph{covering} if for any~$\tilde{\theta}\in\tilde{\mathcal{E}}$ the
differential~$d_{\tilde{\theta}}\xi$ isomorphically
maps~$\tilde{\mathcal{C}}_{\tilde{\theta}}$
onto~$\mathcal{C}_{\xi(\tilde{\theta})}$. A particular case of coverings (the
so-called \emph{Abelian} coverings) is naturally associated with closed
horizontal $1$-forms\footnote{\label{fn:1}When~$\dim M=2$, Abelian coverings
  are associated with conservation laws of the equation~$\mathcal{E}$.}.

By definition, any $\mathcal{C}$-field~$X$ on~$\mathcal{E}$ can be uniquely
lifted to a $\mathcal{C}$-field~$\tilde{X}$ on~$\tilde{\mathcal{E}}$ such
that~$d\xi(\tilde{X})=X$. Consequently, any $\mathcal{C}$-differential
operator~$\Delta\colon P\to Q$ is extended to a $\mathcal{C}$-differential
operator
\begin{equation*}
  \tilde{\Delta}\colon\tilde{\mathcal{F}}\otimes_{\mathcal{F}}P\to
  \tilde{\mathcal{F}}\otimes_{\mathcal{F}}Q,
\end{equation*}
$\tilde{\mathcal{F}}$ being the algebra of smooth functions
on~$\tilde{\mathcal{E}}$.

A \emph{nonlocal} $\xi$-(co)symmetry of~$\mathcal{E}$ is a (co)symmetry of the
covering equation~$\tilde{\mathcal{E}}$. They are solutions of the
equations~$\ell_{\tilde{\mathcal{E}}}\phi=0$
and~$\ell_{\tilde{\mathcal{E}}}^*\psi=0$, resp. Along with these two equations
one can consider the equations
\begin{equation}
  \label{eq:9}
  (1)\ \ \tilde{\ell}_{\mathcal{E}}\phi=0,\qquad
  (2)\ \ \tilde{\ell}_{\mathcal{E}}^*\psi=0.
\end{equation}
Their solutions are called \emph{$\xi$-shadows} of symmetries and
cosymmetries, resp. A shadow of symmetry is a
derivation~$\mathcal{F}\to\tilde{\mathcal{F}}$ that preserves the Cartan
distributions. For any two shadows of symmetries~$\phi_1$ and~$\phi_2$ their
commutator~$\{\phi_1,\phi_2\}$ can be defined. This commutator is a shadow in
a new covering that is canonically determined by~$\phi_1$ and~$\phi_2$.

\subsection{The $\ell$- and $\ell^*$-coverings}
\label{sec:ell-ell-coverings}

Let~$\mathcal{E}\subset J^\infty(\pi)$ be an equation. Its
\emph{$\ell$-covering}~$\tau\colon\mathcal{L}(\mathcal{E})\to\mathcal{E}$ is
obtained by adding to~$\mathcal{E}$ the equation~$\ell_{\mathcal{E}}(q)=0$,
where~$q$ is a new variable. Dually, the
\emph{$\ell^*$-covering}~$\tau^*\colon\mathcal{L}^*(\mathcal{E})\to\mathcal{E}$
is constructed by adding the equation~$\ell_{\mathcal{E}}^*(p)=0$ with a new
variable~$p$. They are the exact counterparts of the tangent and cotangent
bundles in the category of differential equations. By the reasons that will
become clear later, we regard both~$q$ and~$p$ as \emph{odd} variables. The
main point of our method is the fundamental relation between integrability
invariants of~$\mathcal{E}$ and shadows in~$\tau$ and~$\tau^*$. To formulate
this relation, let us give an auxiliary definition: for an arbitrary operator
equation~$A\circ\Delta=\nabla\circ B$ we say that~$\Delta$ is a \emph{trivial
  solutions} if~$\Delta$ is of the form~$\Delta=\Delta'\circ B$. Classes of
solutions modulo trivial ones will be called \emph{nontrivial solutions}. Then
the following results hold.

\begin{theorem}[shadows in the $\ell$-covering]
  \label{thm:ell-shad}
  There is a one-to-one correspondence between nontrivial solutions of the
  equation
  \begin{equation*}
    \ell_{\mathcal{E}}\circ\tens{R}=\tens{R}'\circ\ell_{\mathcal{E}},\qquad
    \tens{R}\colon\kappa\to\kappa,
  \end{equation*}
  and $\tau$-shadows of symmetries linear w.r.t.\ the variables~$q$. In a
  similar way\textup{,} there is a one-to-one correspondence between
  nontrivial solutions of the equation
  \begin{equation*}
    \ell_{\mathcal{E}}^*\circ\tens{S}=\tens{S}'\circ\ell_{\mathcal{E}},\qquad
    \tens{S}\colon\kappa\to\hat{P},
  \end{equation*}
  and $\tau$-shadows of cosymmetries linear w.r.t.\ the variables~$q$.
\end{theorem}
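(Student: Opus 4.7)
The plan is to identify a shadow linear in $q$ with a $\mathcal{C}$-differential operator acting on $q$, and then to translate the shadow equation on the covering into an operator identity on the base equation. Concretely, if $\phi$ is a $\tau$-shadow of symmetry linear in $q$, its components are polynomial in the jets of $q$ of total degree one; collecting coefficients, one gets a unique $\mathcal{C}$-differential operator $\tens{R}\colon\kappa\to\kappa$ (with coefficients on~$\mathcal{E}$) such that $\phi=\tens{R}(q)$, and vice versa. So the first step is to make this parametrization rigorous and to observe that its kernel consists precisely of operators of the form $\tens{R}=\tens{R}''\circ\ell_{\mathcal{E}}$, since $q$ satisfies $\ell_{\mathcal{E}}(q)=0$ in $\mathcal{L}(\mathcal{E})$.

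The second step is the key computation. Because both $\ell_{\mathcal{E}}$ and $\tens{R}$ are $\mathcal{C}$-differential and the lift to the covering is defined by lifting $\mathcal{C}$-fields, the composition formula of Subsection~\ref{sec:nonlocal-theory} gives
\begin{equation*}
  \tilde{\ell}_{\mathcal{E}}\bigl(\tens{R}(q)\bigr)=\widetilde{(\ell_{\mathcal{E}}\circ\tens{R})}(q).
\end{equation*}
Hence the shadow equation $\tilde{\ell}_{\mathcal{E}}(\tens{R}(q))=0$ on $\mathcal{L}(\mathcal{E})$ is equivalent to the vanishing of $(\ell_{\mathcal{E}}\circ\tens{R})(q)$ in the covering. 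Since in $\mathcal{L}(\mathcal{E})$ the only relation on the jets of $q$ beyond those of $\mathcal{E}$ is $\ell_{\mathcal{E}}(q)=0$ (and, $q$ being a ``generic'' odd solution, no further relations), this vanishing is equivalent to the existence of a $\mathcal{C}$-differential operator $\tens{R}'\colon P\to P$ with $\ell_{\mathcal{E}}\circ\tens{R}=\tens{R}'\circ\ell_{\mathcal{E}}$.

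Combining the two steps gives a bijection between nontrivial solutions $\tens{R}$ of the operator equation and nonzero $\tau$-shadows of symmetries linear in $q$: trivial solutions $\tens{R}=\tens{R}''\circ\ell_{\mathcal{E}}$ produce the zero shadow, and the quotient is exactly the set of nontrivial solutions. For the cosymmetry half, I would repeat the argument verbatim with $\psi=\tens{S}(q)$, $\tens{S}\colon\kappa\to\hat{P}$, and $\tilde{\ell}^*_{\mathcal{E}}$ in place of $\tilde{\ell}_{\mathcal{E}}$; the same generic-$q$ argument then forces $\ell^*_{\mathcal{E}}\circ\tens{S}=\tens{S}'\circ\ell_{\mathcal{E}}$ (note the $\ell_{\mathcal{E}}$, not $\ell^*_{\mathcal{E}}$, on the right, because it is this operator that annihilates $q$ in the $\ell$-covering).

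The main obstacle is the ``generic $q$'' step: I must justify that an operator $A$ on $\mathcal{E}$ such that $A(q)\equiv0$ in $\mathcal{L}(\mathcal{E})$ necessarily factors as $A=A'\circ\ell_{\mathcal{E}}$. Formally this amounts to describing the kernel of the evaluation map from $\mathcal{C}$-differential operators on $\kappa$ to the $q$-linear part of $\tilde{\mathcal{F}}$, which should follow from the explicit construction of $\mathcal{L}(\mathcal{E})$ by adjoining $q^\sigma$ modulo the differential consequences of $\ell_{\mathcal{E}}(q)=0$; verifying this carefully, rather than the basic computation in the second paragraph, is where the real content lies.
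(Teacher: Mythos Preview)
The paper does not actually prove this theorem: it is stated in the ``Underlying theory'' section as background material, with the reader referred to Refs.~\cite{B-Ch-D-AMS,K-K-V-dB,K-V,K-K-V-lstar} for details. So there is no proof in the paper to compare against.

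That said, your sketch is the standard argument and is essentially correct. The identification $\phi=\tens{R}(q)$ and the reduction of $\tilde{\ell}_{\mathcal{E}}(\tens{R}(q))=0$ to the operator identity $\ell_{\mathcal{E}}\circ\tens{R}=\tens{R}'\circ\ell_{\mathcal{E}}$ via the generic nature of $q$ is exactly how the proof goes in the cited references. You are also right that the only nontrivial point is the factorization lemma: if a $\mathcal{C}$-differential operator $A\colon\kappa\to P$ satisfies $A(q)=0$ on $\mathcal{L}(\mathcal{E})$, then $A=A'\circ\ell_{\mathcal{E}}$ for some $\mathcal{C}$-differential $A'$. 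This is precisely the statement that the internal coordinates $q_\sigma^j$ on $\mathcal{L}(\mathcal{E})$ (those not eliminated by the differential consequences of $\ell_{\mathcal{E}}(q)=0$) are algebraically independent over $\mathcal{F}$, which follows from the way the prolongation of a linear equation is constructed; in the references this is packaged as the exactness of the compatibility complex (or, equivalently, the $2$-line theorem for regular equations). Once you invoke that, the rest is bookkeeping, and your treatment of the trivial solutions and of the cosymmetry half is correct as stated.
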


\begin{theorem}[shadows in the $\ell^*$-covering]
  \label{thm:ellstar-shad}
  There is a one-to-one correspondence between nontrivial solutions of the
  equation
  \begin{equation*}
    \ell_{\mathcal{E}}\circ\tens{H}=\tens{H}'\circ\ell_{\mathcal{E}}^*,\qquad
    \tens{H}\colon\hat{P}\to\kappa,
  \end{equation*}
  and $\tau^*$-shadows of symmetries linear w.r.t.\ the variables~$p$. In a
  similar way\textup{,} there is a one-to-one correspondence between
  nontrivial solutions of the equation
  \begin{equation*}
    \ell_{\mathcal{E}}^*\circ\tens{\hat{R}}=\hat{\tens{R}}'\circ\ell_{\mathcal{E}}^*,
    \qquad
    \hat{\tens{R}}\colon\hat{P}\to\hat{P},
  \end{equation*}
  and $\tau^*$-shadows of cosymmetries linear w.r.t.\ the variables~$p$.
\end{theorem}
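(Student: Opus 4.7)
The plan is to mirror exactly the strategy of Theorem~\ref{thm:ell-shad}, transporting every argument from the $\ell$-covering to the $\ell^*$-covering by replacing the odd fibre variable~$q$ with~$p$ and swapping the roles of $\ell_{\mathcal{E}}$ and $\ell_{\mathcal{E}}^*$. Throughout I would work on the extended algebra $\tilde{\mathcal{F}}$ of smooth functions on $\mathcal{L}^*(\mathcal{E})$, equipped with the natural $\mathbb{Z}$-grading by total degree in $p$ and its total derivatives. This grading descends to $\mathcal{L}^*(\mathcal{E})$ because the defining relations $\ell_{\mathcal{E}}^*(p)=0$ are themselves homogeneous of degree one in~$p$, so ``linear in $p$'' is an intrinsic condition.

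For the first correspondence, I would start from a $\tau^*$-shadow of symmetry $\phi\in\tilde{\kappa}$ linear in $p$. Since the Cartan-compatible $p$-linear elements of $\tilde{\kappa}$ are generated (over~$\mathcal{F}$) by total derivatives of the components of~$p$, such a $\phi$ admits a unique representation $\phi=\tens{H}(p)$ for some $\mathcal{C}$-differential operator $\tens{H}\colon\hat{P}\to\kappa$. Substituting into the shadow equation $\tilde{\ell}_{\mathcal{E}}(\phi)=0$ from~\eqref{eq:9} and using that $\tilde{\ell}_{\mathcal{E}}$ is the canonical lift of $\ell_{\mathcal{E}}$, one obtains $\ell_{\mathcal{E}}(\tens{H}(p))\equiv 0$ modulo the covering relations $\ell_{\mathcal{E}}^*(p)=0$. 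Because both sides are $\mathcal{C}$-differential and of degree one in $p$, this congruence is equivalent to the operator identity
\begin{equation*}
  \ell_{\mathcal{E}}\circ\tens{H}=\tens{H}'\circ\ell_{\mathcal{E}}^*
\end{equation*}
for some $\mathcal{C}$-differential operator $\tens{H}'\colon\hat{\kappa}\to P$. Conversely, any such identity produces a linear shadow via $\phi=\tens{H}(p)$, and this shadow is zero exactly when $\tens{H}=\tens{H}''\circ\ell_{\mathcal{E}}^*$, which matches the notion of trivial solution stated just before the theorem. Hence the assignment $\tens{H}\mapsto\tens{H}(p)$ descends to a bijection between nontrivial solutions and $p$-linear $\tau^*$-shadows of symmetries.

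The second correspondence is obtained by applying the same procedure to cosymmetry-shadows. Every linear $\tau^*$-shadow of cosymmetry has the form $\psi=\hat{\tens{R}}(p)$ for a $\mathcal{C}$-differential operator $\hat{\tens{R}}\colon\hat{P}\to\hat{P}$, and the shadow equation $\tilde{\ell}_{\mathcal{E}}^*(\psi)=0$ then translates, modulo $\ell_{\mathcal{E}}^*(p)=0$, into
\begin{equation*}
  \ell_{\mathcal{E}}^*\circ\hat{\tens{R}}=\hat{\tens{R}}'\circ\ell_{\mathcal{E}}^*,
\end{equation*}
with triviality identified in exactly the same way. Both halves of the theorem then follow from the single bookkeeping argument above.

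The main technical obstacle, and where I would be most careful, is precisely this bookkeeping around the $p$-grading: one must verify that ``linear in $p$'' is preserved under the relations $\ell_{\mathcal{E}}^*(p)=0$, that the Cartan-compatible $p$-linear elements of $\tilde{\kappa}$ and of the corresponding hat-module are in bijection with $\mathcal{C}$-differential operators $\hat{P}\to\kappa$ and $\hat{P}\to\hat{P}$ respectively, and that the odd parity of $p$ is consistently tracked when passing to adjoints (so that no unwanted sign appears in the derivation of the operator identities). Once these points are settled, the statement reduces to a direct reading of the definitions~\eqref{eq:9} of a shadow in a covering.
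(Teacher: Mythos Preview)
The paper does not actually prove this theorem: Theorem~\ref{thm:ellstar-shad}, like Theorems~\ref{thm:ell-shad}, \ref{thm:brackets} and~\ref{thm:correpondence}, is stated in Section~\ref{sec:underlying-theory} as part of a ``concise exposition of the theoretical background'' and is imported from the references (chiefly \cite{K-K-V-lstar,K-K-V-dB}); no argument is supplied in the text. So there is nothing in the paper to compare your proposal against.

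That said, your sketch is the standard argument and is essentially correct. Writing a $p$-linear element of $\tilde{\kappa}$ as $\tens{H}(p)$ for a $\mathcal{C}$-differential operator $\tens{H}\colon\hat{P}\to\kappa$, reducing $\tilde{\ell}_{\mathcal{E}}(\tens{H}(p))=0$ modulo the covering relations $\ell_{\mathcal{E}}^*(p)=0$ to the operator identity $\ell_{\mathcal{E}}\circ\tens{H}=\tens{H}'\circ\ell_{\mathcal{E}}^*$, and matching ``$\tens{H}(p)=0$ on $\mathcal{L}^*(\mathcal{E})$'' with ``$\tens{H}$ factors through $\ell_{\mathcal{E}}^*$'' is exactly how the cited references proceed. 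The one place to be a bit more explicit is the last identification: you want the statement that a $\mathcal{C}$-differential operator $\Delta\colon\hat{P}\to Q$ with $\Delta(p)=0$ on $\mathcal{L}^*(\mathcal{E})$ necessarily factors as $\Delta=\Delta''\circ\ell_{\mathcal{E}}^*$; this is a regularity/normal-form fact about $\mathcal{C}$-differential operators on the covering (it holds whenever $\mathcal{E}$ is regular enough that internal coordinates for $p$ can be chosen), and it is worth citing or isolating as a lemma rather than folding it into ``bookkeeping''. The parity remark about $p$ is harmless here, since no adjoints of operators in $p$ are taken in this statement.
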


\begin{theorem}
  \label{thm:brackets}
  Let~$\tens{R}_1$ and~$\tens{R}_2$ be recursion operators for symmetries
  on~$\mathcal{E}$ and~$\Phi_{\tens{R}}$ denote the $\tau$-shadow
  corresponding
  to~$\tens{R}$. Then~$\ldb\tens{R}_1,\tens{R}_2\rdb^{\mathrm{n}}=0$
  iff~$\{\Phi_{\tens{R}_1},\Phi_{\tens{R}_1}\}=0$. Similarly\textup{,}
  if~$\tens{H}_1$ and~$\tens{H}_2$ are bivectors
  then~$\ldb\tens{H}_1,\tens{H}_2\rdb^{\mathrm{s}}=0$
  iff~$\{\Phi_{\tens{H}_1}^*,\Phi_{\tens{H}_1}^*\}=0$\textup{,}
  where~$\Phi_{\tens{H}}^*$ denotes the $\tau^*$-shadow corresponding
  to~$\tens{H}$.
\end{theorem}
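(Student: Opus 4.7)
The plan is to use the explicit form of the shadows provided by Theorems~\ref{thm:ell-shad} and~\ref{thm:ellstar-shad}. Under those correspondences, the $\tau$-shadow attached to a recursion operator~$\tens{R}$ is the evolutionary derivation on~$\mathcal{L}(\mathcal{E})$ whose generating section is~$\Phi_{\tens{R}}=\tens{R}(q)$, with~$q$ the odd fibre variable of the $\ell$-covering, and the recursion identity~$\ell_{\mathcal{E}}\circ\tens{R}=\tens{R}'\circ\ell_{\mathcal{E}}$ guarantees that~$\tens{R}(q)$ indeed satisfies~$\tilde{\ell}_{\mathcal{E}}(\tens{R}(q))=0$. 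Dually, a variational bivector~$\tens{H}$ gives the $\tau^*$-shadow with generating section~$\Phi_{\tens{H}}^{*}=\tens{H}(p)$, with~$p$ the odd fibre variable of~$\mathcal{L}^{*}(\mathcal{E})$; the bivector identity~\eqref{eq:6} makes this a legitimate shadow.

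Next I would compute the shadow commutator~$\{\Phi_{\tens{R}_1},\Phi_{\tens{R}_2}\}$ defined at the end of Subsection~\ref{sec:nonlocal-theory}. Since both generating sections are linear and odd in~$q$, their graded Lie bracket, written out in local coordinates by the usual Leibniz rule for evolutionary derivations lifted to the covering, yields an expression linear in~$q$ that I will denote~$[\tens{R}_1,\tens{R}_2](q)$. A direct manipulation, comparing this expression with the coordinate formula for the Nijenhuis bracket (or equivalently the derived-bracket definition used in Ref.~\cite{I-V-V}), identifies~$[\tens{R}_1,\tens{R}_2]$ with~$\ldb\tens{R}_1,\tens{R}_2\rdb^{\mathrm{n}}$. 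Because the correspondence~$\tens{R}\leftrightarrow\tens{R}(q)$ in Theorem~\ref{thm:ell-shad} is one-to-one modulo trivial solutions, the shadow commutator vanishes if and only if the Nijenhuis bracket does.

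The bivector case is completely parallel but carried out on~$\mathcal{L}^{*}(\mathcal{E})$. Writing~$\Phi_{\tens{H}_i}^{*}=\tens{H}_i(p)$, both sections are odd and linear in~$p$, so their shadow commutator is again an evolutionary section linear in~$p$; identifying it with~$\ldb\tens{H}_1,\tens{H}_2\rdb^{\mathrm{s}}(p)$ and invoking the one-to-one correspondence of Theorem~\ref{thm:ellstar-shad} yields the second assertion. The appearance of stars on~$\Phi$ in the statement reflects the fact that we are working in the $\ell^{*}$-covering, so that the shadow for~$\tens{H}$ corresponds, through the dictionary of Table~\ref{tab:1}, to the Hamiltonian vector field on the cotangent bundle defined by the bivector.

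The main obstacle is the identification step: matching the abstractly defined variational Nijenhuis and Schouten brackets with the coordinate expression produced by the graded commutator of shadows. If one adopts the \emph{definition} of these brackets as the derived brackets of the corresponding shadow commutators (as in Ref.~\cite{I-V-V}), the theorem is essentially tautological once the shadows have been correctly exhibited; otherwise one must verify equivalence with whichever intrinsic definition is in force, checking in particular that the odd parity of~$q$ and~$p$ produces the correct signs and that trivial solutions on the operator side correspond precisely to trivial (i.e.\ vanishing) shadows on the covering side.
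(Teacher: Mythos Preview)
The paper does not prove Theorem~\ref{thm:brackets}; it is stated as part of the background framework, and the only discussion is in Remark~\ref{rem:nonlocal}, which sketches the local case (canonically lift each shadow to a genuine symmetry of the covering and commute) and defers the nonlocal case to~\cite{G-K-V-MIIGA}. Your plan follows precisely this route and, as you note in your final paragraph, becomes essentially a tautology once the variational Nijenhuis and Schouten brackets are \emph{defined} via the super bracket of the corresponding shadows, which is the viewpoint of~\cite{I-V-V}.

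One technical slip deserves correction. The super commutator of two shadows that are linear in the odd fibre variable~$q$ (resp.~$p$) is \emph{quadratic}, not linear, in those variables: classically, the Fr\"olicher--Nijenhuis bracket of two vector-valued $1$-forms is a vector-valued $2$-form, and the Schouten bracket of two bivectors is a trivector, so the associated shadow on~$\Pi TM$ (resp.~$\Pi T^*M$) has fibre degree~$2$. Consequently you cannot close the argument by invoking the one-to-one correspondences of Theorems~\ref{thm:ell-shad} and~\ref{thm:ellstar-shad}, which concern only shadows linear in the odd coordinates. What you actually obtain is an identification of~$\{\Phi_{\tens{R}_1},\Phi_{\tens{R}_2}\}$ with the bidifferential object~$\ldb\tens{R}_1,\tens{R}_2\rdb^{\mathrm{n}}$ evaluated on the pair~$(q,q)$; vanishing of this quadratic expression is equivalent to vanishing of the operator because the jets~$q_\sigma^j$ are algebraically independent odd generators on~$\mathcal{L}(\mathcal{E})$. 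The Schouten case is identical with~$p$ in place of~$q$. With this amendment your argument is correct in the local situation; for nonlocal operators, as the paper stresses in Remark~\ref{rem:nonlocal}, one must first reconstruct the shadows and then pass to equivalence classes as in~\cite{G-K-V-MIIGA}.
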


In both cases the curly brackets denote the \emph{super} bracket of shadows
that arises due to oddness of the variables~$q$ and~$p$. Additional discussion
of Theorem~\ref{thm:brackets} the reader will find in
Remark~\ref{rem:nonlocal}.

\begin{theorem}
  \label{thm:correpondence}
  To any cosymmetry of~$\mathcal{E}$ there canonically corresponds a
  conservation law of~$\mathcal{L}(\mathcal{E})$. Dually\textup{,} to any
  symmetry of~$\mathcal{E}$ there canonically corresponds a conservation law
  of~$\mathcal{L}^*(\mathcal{E})$.
\end{theorem}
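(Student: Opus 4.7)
The plan is to use the Green (integration by parts) formula for $\mathcal{C}$-differential operators to turn the defining equations of the $\ell$- and $\ell^*$-coverings into horizontal exactness relations. Recall that for any $\mathcal{C}$-differential operator $\Delta\colon P\to Q$ one has a bilinear $\mathcal{C}$-differential operator $\omega_\Delta\colon \hat{Q}\times P\to \Lambda_h^{n-1}$ satisfying
\begin{equation*}
  \langle a,\Delta(b)\rangle-\langle\Delta^*(a),b\rangle=d_h\omega_\Delta(a,b),
\end{equation*}
where $\langle\cdot,\cdot\rangle$ denotes the natural pairing with values in $\Lambda_h^n$. This formula is the computational core of the proof.

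For the first assertion, I would apply this to $\Delta=\ell_{\mathcal{E}}$, taking $a=\psi\in\hat{P}$ a cosymmetry of $\mathcal{E}$ and $b=q\in\kappa$ the canonical odd fiber coordinate on the $\ell$-covering $\mathcal{L}(\mathcal{E})$. By the very definition of $\mathcal{L}(\mathcal{E})$, the relation $\ell_{\mathcal{E}}(q)=0$ holds identically, while $\ell_{\mathcal{E}}^*(\psi)=0$ by the cosymmetry condition. Substituting in the Green formula, both sides of the left-hand difference vanish on $\mathcal{L}(\mathcal{E})$, so
\begin{equation*}
  d_h\omega_{\ell_{\mathcal{E}}}(\psi,q)=0
\end{equation*}
on $\mathcal{L}(\mathcal{E})$. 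Thus $\omega_{\ell_{\mathcal{E}}}(\psi,q)\in\Lambda_h^{n-1}(\mathcal{L}(\mathcal{E}))$ is a closed horizontal $(n-1)$-form, i.e., a conservation law of the covering equation. For the dual statement one repeats the argument with $\Delta=\ell_{\mathcal{E}}^*$, $a=\phi\in\kappa$ a symmetry of $\mathcal{E}$, and $b=p\in\hat{P}$ the odd fiber coordinate on $\mathcal{L}^*(\mathcal{E})$; the defining equation $\ell_{\mathcal{E}}^*(p)=0$ of $\mathcal{L}^*(\mathcal{E})$ together with $\ell_{\mathcal{E}}(\phi)=0$ yields $d_h\omega_{\ell_{\mathcal{E}}^*}(\phi,p)=0$ on $\mathcal{L}^*(\mathcal{E})$.

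The remaining issue is canonicity: the bilinear operator $\omega_\Delta$ is not unique, since one may modify it by a $d_h$-exact term or by a form bilinear in the total derivatives of $a$ and $b$ that itself descends from a higher Green decomposition. However, any two such choices differ by an exact horizontal $(n-1)$-form, so the associated conservation law class of $\mathcal{L}(\mathcal{E})$ (resp.\ of $\mathcal{L}^*(\mathcal{E})$) is independent of the construction. Moreover, since $q$ and $p$ are odd fiber variables, the produced conservation law has a well-defined parity, and linearity in $\psi$ (resp.\ $\phi$) is immediate from the construction.

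The main obstacle I anticipate is not the computation itself, which is essentially a one-line application of Green's formula, but rather the verification that the resulting correspondence is well-defined and canonical: one must show that ambiguities in $\omega_\Delta$ only contribute trivial conservation laws, and that the output is independent of any auxiliary choices of local trivializations used to write down the integration-by-parts identity. Once this is accomplished, the theorem follows immediately from Green's formula applied on $\mathcal{L}(\mathcal{E})$ and on $\mathcal{L}^*(\mathcal{E})$ in turn.
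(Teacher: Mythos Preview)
The paper does not actually supply a proof of this theorem: it is stated in Section~\ref{sec:ell-ell-coverings} as part of the background theory imported from the references~\cite{B-Ch-D-AMS,K-K-V-dB,K-V}, with no argument given. Your approach via the Green formula is the standard and correct one; in fact the explicit formulas the paper later writes down for the nonlocal forms~$Q^i$ and nonlocal vectors~$P^i$ (Subsections~\ref{sec:nonlocal-forms-1}, \ref{sec:nonlocal-vectors-1}, \ref{sec:nonlocal-forms}, \ref{sec:nonlocal-vectors}) are nothing but the components of $\omega_{\ell_{\mathcal{E}}}(\psi,q)$ and $\omega_{\ell_{\mathcal{E}}^*}(\phi,p)$ obtained by integrating $\langle\psi,\ell_{\mathcal{E}}(q)\rangle$ and $\langle\phi,\ell_{\mathcal{E}}^*(p)\rangle$ by parts, so your construction matches exactly how the result is used in practice.

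One small remark on your discussion of canonicity: the ambiguity in $\omega_\Delta$ is genuinely only a $d_h$-exact term (this follows from the acyclicity of the horizontal complex in positive degree on jets, applied to the difference of two Green operators), so the resulting conservation law class is well-defined without further work; you do not need to invoke any ``higher Green decomposition''.
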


\subsection{Computational scheme}
\label{sec:computational-scheme}

Let locally the equation~$\mathcal{E}$ be given by the system
\begin{equation}
  \label{eq:10}
  \begin{cases}
    F^1(x^1,\dots,x^n,\dots,\pd{^{\abs{\sigma}}u^j}{x^\sigma},\dots)=0,\\
    \dots\\
    F^r(x^1,\dots,x^n,\dots,\pd{^{\abs{\sigma}}u^j}{x^\sigma},\dots)=0,
  \end{cases}
\end{equation}
where~$j=1,\dots,m$ and~$\abs{\sigma}\le k$.

\paragraph{Step 1}
\label{sec:step-1}
consists of writing out defining equations for symmetries and cosymmetries
of~$\mathcal{E}$. Let~$\{u_\sigma^j\}_{\sigma\in S}^{j\in J}$ be
\emph{internal coordinates} on~$\mathcal{E}$, $S$ and~$J$ being some sets of
(multi)indices and~$u_\sigma^j$ corresponding
to~$\partial^{\abs{\sigma}}u^j/\partial x^\sigma$. Then any
$\mathcal{C}$-field on~$\mathcal{E}$ is a linear combination of the
\emph{total derivatives}
\begin{equation}
  \label{eq:11}
  D_{x^i}=\pd{}{x^i}+\sum_{\sigma\in S,j\in J}u_{\sigma
    i}^j\pd{}{u_\sigma^j},\qquad
  i=1,\dots,n.
\end{equation}
The linearization of~$\mathcal{E}$ is the matrix operator with the entries
\begin{equation}
  \label{eq:12}
  (\ell_{\mathcal{E}})_j^l=\sum_{\sigma\in S}\pd{F^l}{u_\sigma^j}D_\sigma,
  \qquad j\in J, \quad l=1,\dots,r.
\end{equation}
A symmetry~$\phi=(\phi^1,\dots,\phi^m)$ enjoys the equation
\begin{equation}
  \label{eq:13}
  \sum_{\sigma\in S,j\in J}\pd{F^l}{u_\sigma^j}D_\sigma(\phi^j)=0,\qquad
  l=1,\dots,r,
\end{equation}
and the corresponding field is the \emph{evolutionary vector field}
\begin{equation}
  \label{eq:14}
  \re_\phi=\sum_{\sigma\in S,j\in J}D_\sigma(\phi^j)\pd{}{u_\sigma^j},
\end{equation}
while the bracket of symmetries~$\phi_1$, $\phi_2$ is given by
\begin{equation}
  \label{eq:15}
  \{\phi_1,\phi_2\}^j=\re_{\phi_1}(\phi_2^j)-\re_{\phi_2}(\phi_1^j),\qquad
  j=1,\dots,m.
\end{equation}
The operator adjoint to~\eqref{eq:12} is
\begin{equation}
  \label{eq:16}
  (\ell_{\mathcal{E}}^*)_l^j=
  \sum_{\sigma\in S}(-1)^{\abs\sigma}D_\sigma\circ\pd{F^l}{u_\sigma^j},
  \qquad j\in J, \quad l=1,\dots,r,
\end{equation}
and a cosymmetry~$\psi=(\psi^1,\dots,\psi^r)$ satisfies the equation
\begin{equation}
  \label{eq:17}
  \sum_{\sigma,l}(-1)^{\abs\sigma}D_\sigma(\pd{F^l}{u_\sigma^j}\psi^l)=0,
  \qquad j\in J.
\end{equation}

\paragraph{Step 2.}
\label{sec:step-2}

Here we look for closed $1$-forms and construct Abelian coverings associated
to them. A horizontal form~$\omega=\sum_{i}A_i\d x^i$ is closed if
\begin{equation}
  \label{eq:18}
  D_{x^\alpha}(A_\beta)=D_{x^\beta}(A_\alpha),\qquad
  1\le\alpha<\beta\le n.
\end{equation}
Such a form gives rise to a nonlocal variable~$w=w_\omega$ that satisfies the
equations
\begin{equation}
  \label{eq:19}
  \pd{w}{x^i}=A_i,\qquad i=1,\dots,n.
\end{equation}
These equations are compatible on~\eqref{eq:10} due to~\eqref{eq:18}. Recall
that for~$n=2$ closed $1$-forms coincide with conservation laws. The total
derivatives lifted to the covering equation~$\tilde{\mathcal{E}}$ are
\begin{equation}
  \label{eq:20}
  \tilde{D}_{x^i}=D_{x^i}+A_i\pd{}{w},\qquad i=1,\dots,n.
\end{equation}

\paragraph{Step 3.}
\label{sec:step-3}

At this step we compute a number of particular symmetries and cosymmetries
(using equations~\eqref{eq:13} and~\eqref{eq:17}, resp.). They are used to
construct canonical nonlocal variables on the $\ell^*$-covering
(\emph{nonlocal vectors}) and on the $\ell$-covering (\emph{nonlocal forms}),
resp., at Step~4. We also use them as seed elements in series generated by
recursion operators.

\paragraph{Step 4}
\label{sec:step-4}
consists of construction of the $\ell$- and $\ell^*$-coverings and
introduction of canonical nonlocal variables over them (see Step~3). The
$\ell$-covering is obtained by adding to~\eqref{eq:9} the system of equations
\begin{equation}
  \label{eq:21}
  \sum_{\sigma\in S,j\in J}\pd{F^l}{u_\sigma^j}\pd{q^j}{x^\sigma}=0,\qquad
  l=1,\dots,r,
\end{equation}
cf.~with Eq.~\eqref{eq:12}, while the $\ell^*$-covering is given by
\begin{equation}
  \label{eq:22}
  \sum_{\sigma,l}(-1)^{\abs\sigma}\pd{(\pd{F^l}{u_\sigma^j}p^l)}{x^\sigma}=0,
  \qquad j\in J,
\end{equation}
that comes from~\eqref{eq:16}.

If~$\phi$ is a symmetry of~$\mathcal{E}$ then one can introduce a covering
over~$\mathcal{L}^*(\mathcal{E})$ described by the system
\begin{equation}
  \label{eq:23}
  \pd{\bar{p}}{x^i}=\sum_{\sigma,l}\Delta_{\sigma,i}^l(\phi)p_\sigma^l,\qquad
  i=1,\dots,n,
\end{equation}
where~$\Delta_{\sigma,i}^l$ are $\mathcal{C}$-differential operators (see
Theorem~\ref{thm:correpondence}). In a similar way, to any cosymmetry~$\psi$
there corresponds a covering
\begin{equation}
  \label{eq:24}
  \pd{\bar{q}}{x^i}=\sum_{\sigma,j}\nabla_{\sigma,i}^j(\psi)q_\sigma^j\qquad
  i=1,\dots,n,
\end{equation}
$\nabla_{\sigma,i}^j$ being $\mathcal{C}$-differential operators as well.  We
omit here a general description of these operators and refer the reader to the
particular case of our interest exposed in Sections~\ref{sec:nonlocal-forms}
and~\ref{sec:nonlocal-vectors}.

\paragraph{Step 5.}
\label{sec:step-5}
We now use Theorems~\ref{thm:ell-shad} and~\ref{thm:ellstar-shad} to construct
recursion operators and Hamiltonian and symplectic
structures. Let~$\psi_1,\dots,\psi_s$ cosymmetries of~$\mathcal{E}$. Let us
consider the covering~$\widetilde{\mathcal{L}(\mathcal{E})}$
over~$\mathcal{L}(\mathcal{E})$ with the nonlocal
variables~$\bar{q}_1,\dots,\bar{q}_s$ defined by~\eqref{eq:24} and lift the
operators~$\ell_{\mathcal{E}}$ and~$\ell_{\mathcal{E}}^*$ to this
covering. Then the following result specifies Theorem~\ref{thm:ell-shad}:
\begin{theorem}\label{thm:nloc-ell}
  Let~$\Phi=(\Phi^1,\dots,\Phi^m)$ be a solution of the
  equation~$\tilde{\ell}_{\mathcal{E}}(\Phi)=0$
  on~$\widetilde{\mathcal{L}(\mathcal{E})}$ linear w.r.t.~$q_\sigma^\alpha$
  and $\bar{q}_\beta$\textup{:}
  \begin{equation*}
    \Phi^j=\sum_{\alpha,\sigma}a_\sigma^{\alpha,j}q_\sigma^\alpha
    +\sum_\beta b_\beta^j\bar{q}_\beta.
  \end{equation*}
  Then the operator
  \begin{equation*}
    \tens{R}=\sum_{\sigma}a_\sigma^{\alpha,j}D_\sigma+
    \sum_\beta b_\beta^jD_{x^i}^{-1}\circ
    \sum_{\sigma}\nabla_{\sigma,i}^\alpha(\psi_\beta)D_\sigma,
  \end{equation*}
  takes shadow of symmetries to shadows of symmetries. In a similar
  way\textup{,} to any solution~$\Psi=(\Psi^1,\dots,\Psi^r)$\textup{,}
  \begin{equation*}
    \Psi^j=\sum_{\alpha,\sigma}c_\sigma^{\alpha,j}q_\sigma^\alpha
    +\sum_\beta d_\beta^j\bar{q}_\beta
  \end{equation*}
  there corresponds the operator
  \begin{equation*}
    \tens{S}=\sum_{\sigma}c_\sigma^{\alpha,j}D_\sigma+
    \sum_\beta d_\beta^jD_{x^i}^{-1}\circ
    \sum_{\sigma}\nabla_{\sigma,i}^\alpha(\psi_\beta)D_\sigma
  \end{equation*}
  that takes shadows of symmetries to shadows of cosymmetries.
\end{theorem}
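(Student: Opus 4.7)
The plan is to reduce the statement to Theorem~\ref{thm:ell-shad} by interpreting $\widetilde{\mathcal{L}(\mathcal{E})}$ as an Abelian extension of $\mathcal{L}(\mathcal{E})$ in which each new variable $\bar{q}_\beta$ represents the formal primitive $D_{x^i}^{-1}\circ\sum_\sigma\nabla_{\sigma,i}^\alpha(\psi_\beta)\,q_\sigma^\alpha$.

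First I would verify the compatibility of the defining equations~\eqref{eq:24} for $\bar{q}_\beta$ on $\mathcal{L}(\mathcal{E})$, i.e., that the horizontal $1$-form $\omega_\beta=\sum_i\bigl(\sum_{\sigma,\alpha}\nabla_{\sigma,i}^\alpha(\psi_\beta)\,q_\sigma^\alpha\bigr)\,\d x^i$ is $d_h$-closed on $\mathcal{L}(\mathcal{E})$. This is precisely the content of Theorem~\ref{thm:correpondence}: the cosymmetry $\psi_\beta$ of $\mathcal{E}$ produces a conservation law of $\mathcal{L}(\mathcal{E})$ whose density is exactly this $1$-form. Once compatibility holds, the operator $D_{x^i}^{-1}\circ\sum_\sigma\nabla_{\sigma,i}^\alpha(\psi_\beta)D_\sigma$ applied to $q^\alpha$ produces a function equal to $\bar{q}_\beta$, independently of the choice of $i$.

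With this identification, the ansatz $\Phi^j=\sum_{\alpha,\sigma}a_\sigma^{\alpha,j}q_\sigma^\alpha+\sum_\beta b_\beta^j\bar{q}_\beta$ is nothing but $\Phi=\tens{R}(q)$ for the operator $\tens{R}$ of the statement. The hypothesis $\tilde{\ell}_{\mathcal{E}}(\Phi)=0$ therefore reads $\ell_{\mathcal{E}}\bigl(\tens{R}(q)\bigr)=0$ modulo $\ell_{\mathcal{E}}(q)=0$ on $\widetilde{\mathcal{L}(\mathcal{E})}$. Viewing both sides as (possibly nonlocal) $\mathcal{C}$-differential operators in the $q$'s and equating, one obtains, by the standard division argument in the ring of such operators, an identity $\ell_{\mathcal{E}}\circ\tens{R}=\tens{R}'\circ\ell_{\mathcal{E}}$ for some operator $\tens{R}'$. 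This is the nonlocal extension of the argument behind Theorem~\ref{thm:ell-shad}, and by definition~\eqref{eq:8} it exhibits $\tens{R}$ as a recursion operator, hence one that sends shadows of symmetries to shadows of symmetries.

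The second assertion, concerning $\Psi$ and $\tens{S}$, is proved in complete analogy: the equation $\tilde{\ell}_{\mathcal{E}}^*(\Psi)=0$ translates, through the same substitution $\bar{q}_\beta\mapsto D_{x^i}^{-1}\circ\sum_\sigma\nabla_{\sigma,i}^\alpha(\psi_\beta)D_\sigma$, into $\ell_{\mathcal{E}}^*\circ\tens{S}=\tens{S}'\circ\ell_{\mathcal{E}}$, which is the characterization provided by the second half of Theorem~\ref{thm:ell-shad}. The main obstacle is not computational but conceptual: one must make precise in what ring of nonlocal $\mathcal{C}$-differential operators the symbol $D_{x^i}^{-1}$ lives, and in what sense the identities above hold modulo trivial solutions in the sense of Section~\ref{sec:ell-ell-coverings}, so that the correspondence $\Phi\leftrightarrow\tens{R}$ is unambiguous. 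Once Theorem~\ref{thm:correpondence} supplies the conservation laws and the class of admissible nonlocal operators is fixed, the rest of the proof is bookkeeping.
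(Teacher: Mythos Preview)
The paper does not give a proof of this theorem: it is stated as a direct specification of Theorem~\ref{thm:ell-shad} to the situation where the $\ell$-covering has been enlarged by the Abelian nonlocalities~$\bar{q}_\beta$ coming from cosymmetries via Theorem~\ref{thm:correpondence}, and no further argument is supplied. Your proposal fills in exactly the reasoning the paper leaves implicit---using Theorem~\ref{thm:correpondence} to justify the closedness of the forms defining~$\bar{q}_\beta$, identifying $\Phi=\tens{R}(q)$, and then invoking the operator-division argument behind Theorem~\ref{thm:ell-shad}---so your approach is correct and is the intended one.
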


In a dual way, consider symmetries~$\phi_1,\dots,\phi_s$ of the
equation~$\mathcal{E}$ and the
covering~$\widetilde{\mathcal{L}^*(\mathcal{E})}$
over~$\mathcal{L}^*(\mathcal{E})$ with the nonlocal
variables~$\bar{p}_1,\dots,\bar{p}_s$ defined by~\eqref{eq:23}. Then,
lifting~$\ell_{\mathcal{E}}$ and~$\ell_{\mathcal{E}}^*$, we obtain a similar
specification of Theorem~\ref{thm:ellstar-shad}:

\begin{theorem}
  \label{thm:nloc-ellstar}
  Let~$\Phi=(\Phi^1,\dots,\Phi^m)$ be a solution of the
  equation~$\tilde{\ell}_{\mathcal{E}}(\Phi)=0$
  on~$\widetilde{\mathcal{L}^*(\mathcal{E})}$ linear
  w.r.t.~$p_\sigma^\alpha$ and $\bar{p}_\beta$\textup{:}
  \begin{equation*}
    \Phi^j=\sum_{\alpha,\sigma}a_\sigma^{\alpha,j}p_\sigma^\alpha
    +\sum_\beta b_\beta^j\bar{p}_\beta.
  \end{equation*}
  Then the operator
  \begin{equation*}
    \tens{H}=\sum_{\sigma}a_\sigma^{\alpha,j}D_\sigma+
    \sum_\beta b_\beta^jD_{x^i}^{-1}\circ
    \sum_{\sigma}\Delta_{\sigma,i}^\alpha(\phi_\beta)D_\sigma,
  \end{equation*}
  takes shadow of cosymmetries to shadows of symmetries. In a similar
  way\textup{,} to any solution~$\Psi=(\Psi^1,\dots,\Psi^r)$\textup{,}
  \begin{equation*}
    \Psi^j=\sum_{\alpha,\sigma}c_\sigma^{\alpha,j}p_\sigma^\alpha
    +\sum_\beta d_\beta^j\bar{p}_\beta
  \end{equation*}
  there corresponds the operator
  \begin{equation*}
    \hat{\tens{R}}=\sum_{\sigma}c_\sigma^{\alpha,j}D_\sigma+
    \sum_\beta d_\beta^jD_{x^i}^{-1}\circ
    \sum_{\sigma}\Delta_{\sigma,i}^\alpha(\phi_\beta)D_\sigma
  \end{equation*}
  that takes shadows of cosymmetries to shadows of cosymmetries.
\end{theorem}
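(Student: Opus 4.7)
The plan is to reduce the statement to Theorem~\ref{thm:ellstar-shad} by an explicit analysis of the Abelian covering $\widetilde{\mathcal{L}^*(\mathcal{E})}\to\mathcal{L}^*(\mathcal{E})$. First I would describe the nonlocal variables $\bar{p}_\beta$ in concrete terms. By Theorem~\ref{thm:correpondence}, every symmetry $\phi_\beta$ of $\mathcal{E}$ canonically produces a conservation law of $\mathcal{L}^*(\mathcal{E})$, which by the mechanism of Step~2 (see~\eqref{eq:18}--\eqref{eq:20}) yields an Abelian covering. Its defining equations are precisely~\eqref{eq:23}, so that on the extended covering
\begin{equation*}
  \tilde{D}_{x^i}=D_{x^i}+\sum_\beta\Bigl(\sum_{\sigma,l}\Delta_{\sigma,i}^l(\phi_\beta)p_\sigma^l\Bigr)\pd{}{\bar{p}_\beta},
\end{equation*}
and the compatibility of the system~\eqref{eq:23} (i.e.\ the closedness of the underlying horizontal $1$-form) is exactly what makes $\bar{p}_\beta$ independent of the integration path.

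Next I would make precise the correspondence $\Phi\leftrightarrow\tens{H}$. Substituting a cosymmetry $\psi\in\hat{P}$ for $p$ in $\Phi$ means replacing $p_\sigma^\alpha$ by $D_\sigma(\psi^\alpha)$ and $\bar{p}_\beta$ by a nonlocal primitive; the latter is well-defined because the $1$-form used to introduce $\bar{p}_\beta$ remains closed after the substitution $p\mapsto\psi$ (this follows again from~\eqref{eq:23} together with the cosymmetry equation $\ell_{\mathcal{E}}^*(\psi)=0$, which is what permits us to write $\bar{p}_\beta$ symbolically as $D_{x^i}^{-1}\sum_{\sigma,l}\Delta_{\sigma,i}^l(\phi_\beta)D_\sigma(\psi^l)$ for any single choice of $i$). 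Consequently the linear expression for $\Phi$ translates termwise into the displayed formula for~$\tens{H}$.

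The main verification is then that $\tilde{\ell}_{\mathcal{E}}(\Phi)=0$ on $\widetilde{\mathcal{L}^*(\mathcal{E})}$ is equivalent to $\tens{H}$ sending cosymmetries of $\mathcal{E}$ to shadows of symmetries. One direction follows directly from the definition: substituting a cosymmetry $\psi$ for $p$ turns the lifted equation $\tilde{\ell}_{\mathcal{E}}(\Phi)=0$ into $\ell_{\mathcal{E}}(\tens{H}(\psi))=0$, after noting that $\tilde{D}_{x^i}\bar{p}_\beta$ evaluated on $\psi$ produces precisely the kernel $\sum_\sigma\Delta_{\sigma,i}^\alpha(\phi_\beta)D_\sigma(\psi^\alpha)$ that appears inside $D_{x^i}^{-1}$ in the formula for~$\tens{H}$. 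For the converse, one checks that the action described by $\tens{H}$ can be encoded back into a linear combination of $p_\sigma^\alpha$ and $\bar{p}_\beta$, and that the defining relation $\ell_{\mathcal{E}}\circ\tens{H}=\tens{H}'\circ\ell_{\mathcal{E}}^*$ provided by Theorem~\ref{thm:ellstar-shad} is exactly what is required for the resulting $\Phi$ to lie in the kernel of $\tilde{\ell}_{\mathcal{E}}$.

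The second assertion, concerning $\Psi$ and the recursion operator $\hat{\tens{R}}$ on cosymmetries, is proved along exactly the same lines with $\tilde{\ell}_{\mathcal{E}}$ replaced by $\tilde{\ell}_{\mathcal{E}}^*$ and Theorem~\ref{thm:ellstar-shad} applied to its second half. The step I expect to require the most care is the one where the formal expression $D_{x^i}^{-1}\circ\sum_\sigma\Delta_{\sigma,i}^\alpha(\phi_\beta)D_\sigma$ is shown to be independent of the chosen index $i$ (modulo constants of integration that correspond to trivial solutions in the sense of Section~\ref{sec:ell-ell-coverings}); this is really the same check as the closedness of the $1$-form generating the Abelian cover, but it must be performed modulo the defining equations of $\mathcal{L}^*(\mathcal{E})$ and the symmetry equation for $\phi_\beta$, both of which are used together via the $\mathcal{C}$-differential identities satisfied by the operators $\Delta_{\sigma,i}^\alpha$.
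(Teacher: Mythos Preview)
The paper does not supply a proof of Theorem~\ref{thm:nloc-ellstar}; it is stated in Section~\ref{sec:underlying-theory} as part of the ``concise exposition of the theoretical background'' with proofs deferred to Refs.~\cite{B-Ch-D-AMS,K-K-V-dB,K-V,K-K-V-lstar}. Your plan is a correct reconstruction of the standard argument behind this result: the reduction to Theorem~\ref{thm:ellstar-shad} by substituting a cosymmetry $\psi$ for the odd variable $p$, together with the observation that the Abelian-covering variables $\bar{p}_\beta$ become well-defined nonlocal quantities after substitution precisely because the horizontal $1$-form defining them is closed modulo $\ell_{\mathcal{E}}^*(p)=0$ and hence remains closed when $\ell_{\mathcal{E}}^*(\psi)=0$. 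The point you flag as most delicate---independence of the formal primitive $D_{x^i}^{-1}(\cdots)$ from the index $i$---is indeed just the closedness condition~\eqref{eq:18} read on $\mathcal{L}^*(\mathcal{E})$, and you have identified the right ingredients (the defining equations of the $\ell^*$-covering and the symmetry equation for $\phi_\beta$) needed to verify it.
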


After finding the operators~$\tens{R}$, $\tens{S}$, $\tens{H}$
and~$\hat{\tens{R}}$ we check conditions~\eqref{eq:6} and~\eqref{eq:7} and
compute necessary Schouten and Nijenhuis brackets.

\paragraph{Step 6}
\label{sec:step-6}
The last step consists of establishing algebraic relations between the
invariants constructed above.

\section{The matrix version}
\label{sec:matrix-version}

We consider Eq.~\eqref{eq:1} in the form
\begin{equation}
  \label{eq:26}
  u_t-u_{txx}+3uu_x=2u_xu_{xx}+uu_{xxx},
\end{equation}
i.e., set~$\mu=0$, and, similar
to~\eqref{eq:2}, introduce a new variable~$w=\alpha u-u_{xx}$, where~$\alpha$
is a new real constant. Consequently, the initial equation transforms to the system
\begin{equation}
  \label{eq:25}
  \begin{cases}
    w_t&=-2u_xw-uw_x,\\
    w&=\alpha u-u_{xx}.
  \end{cases}
\end{equation}
We choose the following variables for internal local coordinates on the
infinite prolongation of Eq.~\eqref{eq:25}:
\begin{equation*}
  x,\ t,\ w_l=\pd{^kw}{x^k},\ u_{0,k}=\pd{^ku}{t^k},\
  u_{1,k}=\pd{^{k+1}u}{x\,\partial t^k},
  \qquad k=0,1,\dots
\end{equation*}
Then the total derivatives in these coordinates will be of the form
\begin{align*}
  D_x=&\pd{}{x}+\sum_{k\ge0}w_{k+1}\pd{}{w_k}+\sum_{k\ge0}u_{1,k}\pd{}{u_{0,k}}+
  \sum_{k\ge0}D_t^k(\alpha u-w)\pd{}{u_{1,k}},\\
  D_t=&\pd{}{t}-\sum_{k\ge0}D_x^k(2u_{1,0}w+uw_1)\pd{}{w_k}+\sum_{k\ge0}u_{0,k+1}\pd{}{u_{0,k}}+
  \sum_{k\ge0}u_{1,k+1}\pd{}{u_{1,k}}.
\end{align*}

We introduce the following gradings:
\begin{equation*}
  \abs{x}=-1,\ \abs{t}=-2,\ \abs{u}=1,\ \abs{w}=3,\ \abs{\alpha}=2
\end{equation*}
and extend them in a natural way to all polynomial functions of the internal
coordinates. Then all computations can be restricted to homogeneous
components.

\subsection{Nonlocal variables}
\label{sec:nonlocal-variables-1}

In subsequent computations we shall need the following nonlocal variables
arising from conservation laws and defined by the equations
\begin{align*}
  (s_2)_x& = w,\\
  (s_2)_t& = ( - u^2\alpha - 2uw + u_1^2)/2;\\
  (s_3)_x& = uw,\\
  (s_3)_t& = - 2u^2w + uu_{1,1} - u_1u_{0,1};\\
  (s_6)_x& = w(uw - u_{1,1}),\\
  (s_6)_t& = ( - u^4\alpha^2 - 4u^3w\alpha + 2u^2u_1^2\alpha +
  4u^2u_1w_1 - 4u^2w^2 - 4uu_{0,2}\alpha + 12 uu_1^2w + 4uu_{1,1}w\\
  &- u_1^4 + 4u_1u_{1,2} - 4u_1u_{0,1}w)/4;\\
  (s_7)_x& = - 2u^3w_2 + 60u^2w^2 - 36uu_{1,1}w + 30 u_{0,2}w +
  27u_1^2u_{1,1},\\
  (s_7)_t& = - 104u^4w\alpha - 28u^4w_2 - 132u^3u_1w_1 + 36u^3
  u_{1,1}\alpha - 40u^3w^2 - 18u^2u_{0,2}\alpha - 48u^2u_1^2w \\
  & - 36u^2u_1u_{0,1}\alpha + 144u^2u_{1,1}w + 66 u^2u_{0,1}w_1 -
  78uu_{0,2}w - 36uu_1^2u_{1,1} + 18uu_1u_{1,2}\\
  & - 36uu_1u_{0,1}w + 30uu_{1,3} + 18uu_{1,1}^2 - 36uu_{0,1}^2\alpha + 9u_{0,2}u_1^2
  + 18u_{0,2}u_{1,1} - 30u_{0,3}u_1\\
  & + 36u_1^3u_{0,1} + 36u_1u_{1,1}u_{0,1} - 18u_{1,2} u_{0,1} + 18u_{0,1}^2w.
\end{align*}
The variable~$s_i$ is of grading~$i$ and computational experiment shows that
for every grading~$i=4n-2+\epsilon$, $\epsilon=0,1$, there exist an~$s_i$ such
that~$\abs{s_i}=i$.

In addition, we found conservation laws of fractional gradings:
\begin{align*}
  (s_{1/2})_x& = w^{1/2},\\
  (s_{1/2})_t& =  - w^{1/2}u;\\
  (s_{-1/2})_x& = w^{-3/2}(6w\alpha + w_2),\\
  (s_{-1/2})_t& = w^{-3/2}(4uw\alpha - uw_2 + 14w^2);\\
  (s_{-3/2})_x& = w^{-7/2}(12w^2\alpha^2 + 12ww_2\alpha - 2ww_4 + 7w_2^2),\\
  (s_{-3/2})_t& = w^{-7/2}(16uw^2\alpha^2 + 16uww_2\alpha + 2uww_4 - 7uw_2^2 -
  124w^3\alpha - 32w^2w_2);\\
  (s_{-5/2})_x& = w^{-11/2}(216w^3\alpha^3 + 540w^2w_2\alpha^2 - 180w^2w_4\alpha + 20w^2w_6 + 882ww_2^2\alpha\\
  & - 378ww_2w_4 - 16ww_3^2 + 837w_2^3),\\
  (s_{-5/2})_t & = w^{-11/2}(320uw^3\alpha^3 + 1008uw^2w_2\alpha^2 - 360uw^2w_4\alpha - 20uw^2w_6 \\
  &+ 1908uww_2^2\alpha + 378uww_2w_4 + 16uww_3^2 - 837uw_2^3 - 64u_1w^2w_3\alpha - 1400w^4\alpha^2 \\
  &- 1868w^3w_2\alpha + 580w^3w_4 + 64w^2w_1w_3 - 2322w^2w_2^2),
\end{align*}
etc.

\subsection{Symmetries}
\label{sec:symmetries-1}

A symmetry~$\phi=(\phi^w,\phi^u)$ of Eq.~\eqref{eq:25} must satisfy the
linearized equation
\begin{equation*}
  \begin{cases}
    D_t(\phi^w) + uD_x(\phi^w) + 2u_1\phi^w + 2wD_x(\phi^u) + w_1\phi^u =0,\\
    \phi^w + D_x^2(\phi^u) - \alpha\phi^u =0.
  \end{cases}
\end{equation*}
Direct computations lead to the following results.

\paragraph{(x,t)-independent symmetries.}
\label{sec:x-t-independent}

One can observe two types of symmetries that are independent of~$x$
and~$t$. The first one consists of symmetries of integer gradings:
\begin{align*}
  \phi^w_1& = w_1, \\
  \phi^u_1& = u_1;\\
  \phi^w_2& = uw_1 + 2u_1w, \\
  \phi^u_2& = - u_{0,1};\\
  \phi^w_5& = u^2w_1\alpha + 2uww_1 - u_1^2w_1 - 2u_{1,1}w_1 -
  4u_{0,1}w\alpha,\\
  \phi^u_5& = u^2u_1\alpha + 2u^2w_1 + 6uu_1w - u_1^3 + 2u_{1,2} -
  2u_{0,1}w;\\
  \phi^w_6& = u^3w_1\alpha + 2u^2u_1w\alpha + 8u^2ww_1 - uu_1^2w_1 +
  12uu_1w^2 - 2uu_{1,1} w_1 \\
  & + 2u_{0,2}w_1 - 2u_1^3w  +2u_1u_{0,1}w_1 + 4u_{1,2}w - 4u_{0,1}w^2,\\
  \phi^u_6& = 2u^3w_1 + 6u^2u_1w - 3u^2u_{0,1}\alpha - 6uu_{0,1}w - 2u_{0,3} +
  3u_1^2u_{0,1},
\end{align*}
etc. Symmetries of the second type have semi-integer gradings:
\begin{align*}
  \phi^w_{-3/2}& = ( - 4w^2w_1\alpha + 4w^2w_3 - 18ww_1w_2 +
  15w_1^3)/(2w^{7/2}),\\
  \phi^u_{-3/2}& =  - 2w_1/(w^{3/2});\\
  \phi^w_{-5/2}& = ( - 48w^4w_1\alpha^2 + 80w^4w_3\alpha -
  32w^4w_5 - 520w^3w_1w_2\alpha + 320w^3w_1w_4 \\
  & + 560w^3w_2w_3 + 560w^2w_1^3\alpha - 1820w^2w_1^2w_3 -
  2520w^2w_1w_2^2 + 6930ww_1^3w_2 \\
  & - 3465w_1^5)/(12w^{13/2}),\\
  \phi^u_{-5/2}& = ( - 12w^2w_1\alpha + 8w^2w_3 - 40ww_1w_2 +
  35w_1^3)/(3w^{9/2}),
\end{align*}
etc. All symmetries are local and~$\abs{\phi_\gamma}=\gamma$.

If one adds to the nonlocal setting the variables~$s_\gamma$ (see above) then
an additional series of nonlocal symmetries arises:
\begin{align*}
  \bar\phi^w_{-1}& = (s_{1/2}w^{-1/2}(4w^2w_1\alpha - 4w^2w_3 + 18ww_1w_2 - 15w_1^3) + 16tw^3w_1 \\
  &+ 2w( - 4w^2\alpha - 4ww_2 + 5w_1^2))/(16w^3),\\
  \bar\phi^u_{-1}& = (s_{1/2}w^{-1/2}w_1 + 4tu_1w - 2w)/(4w);\\
  \bar\phi^w_{-2}& = (3s_{1/2}w^{-1/2}( - 48w^4w_1\alpha^2 + 80w^4w_3\alpha - 32w^4w_5 - 520w^3w_1w_2\alpha + 320w^3w_1w_4 \\
  & + 560w^3w_2w_3 + 560w^2w_1^3\alpha - 1820w^2w_1^2w_3 - 2520w^2w_1w_2^2 + 6930ww_1^3w_2 - 3465w_1^5)\\
  & + 2s_{-1/2}w^{-1/2}w^3( - 4w^2w_1\alpha + 4w^2w_3 - 18ww_1w_2 + 15w_1^3) + 2w(96w^4\alpha^2  \\
  & + 384w^3w_2\alpha- 192w^3w_4 - 740w^2w_1^2\alpha + 1364w^2w_1w_3 +
  960w^2w_2^2 \\
  &- 5514ww_1
  ^2w_2 + 3465w_1^4))/(4w^6),\\
  \bar\phi^u_{-2}& = (3s_{1/2}w^{-1/2}( - 12w^2w_1\alpha + 8w^2w_3 - 40ww_1w_2 + 35w_1^3) - 2s_{-1/2}w^{-1/2}w^3w_1\\
  & + 2w(24w^2\alpha + 24ww_2 - 35w_1^2))/w^4,
\end{align*}
etc.

\paragraph{(x,t)-dependent symmetries.}
\label{sec:x-t-dependent}

The first three symmetries that depend on~$x$ and~$t$ are
\begin{align*}
  \phi^w_0& = t( - uw_1 - 2u_1w) + w,\\
  \phi^u_0& = tu_{0,1} + u;\\
  \phi^w_3& = t( - u^2w_1\alpha - 2uww_1 + u_1^2w_1 + 2u_{1,1}w_1 +
  4u_{0,1}w\alpha) + 2(s_2w_1 + 2uw\alpha + u_1 w_1),\\
  \phi^u_3& = t( - u^2u_1\alpha - 2u^2w_1 - 6uu_1w + u_1^3 -
  2u_{1,2} + 2u_{0,1}w) + 2(s_2u_1 + 2uw - 2 u_{1,1});\\
  \phi^w_4& = t( - u^3w_1\alpha - 2u^2u_1w\alpha - 8u^2ww_1 +
  uu_1^2w_1 - 12uu_1w^2 + 2uu_{1,1}w_1 - 2 u_{0,2}w_1 + 2u_1^3w \\
  & - 2u_1u_{0,1}w_1 - 4u_{1,2}w + 4u_{0,1}w^2) + 2(s_2(uw_1 +
  2u_1w) + s_3w_1 + 4u w^2 \\
  &- 4u_{1,1}w - 2u_{0,1}w_1),\\
  \phi^u_4& = t( - 2u^3w_1 - 6u^2u_1w + 3u^2u_{0,1}\alpha +
  6uu_{0,1}w + 2u_{0,3} - 3u_1^2u_{0,1}) + 2( - s_2u_{0,1} + s_3u_1\\
  &+ u^3\alpha + 3u^2w - uu_1^2 + 3u_{0,2}).
\end{align*}
All these symmetries, except for the first one, are nonlocal (description of
the nonlocal variable is given in Subsection~\ref{sec:nonlocal-variables-1})
and, as above, the subscript denotes the grading.

\subsection{Cosymmetries}
\label{sec:cosymmetries-1}

The defining equation for cosymmetries~$\psi=(\psi^w,\psi^u)$ is the adjoint
to the linearization of~\eqref{eq:25}:
\begin{align*}
  \begin{cases}
    D_t(\psi^w) + uD_x(\psi^w) - u_1\psi^w - \psi^u=0,\\
    2wD_x(\psi^w) + w_1\psi^w - D_x^2(\psi^u) + \alpha\psi^u=0.
  \end{cases}
\end{align*}
Similar to symmetries, we consider two types of cosymmetries.

\paragraph{(x,t)-independent cosymmetries.}
\label{sec:x-t-independent-1}

They are local and may be of integer and semi-integer gradings:
\begin{align*}
  \psi^w_3& = 1,\\
  \psi^u_3& = -u_1;\\
  \psi^w_4& = u,\\
  \psi^u_4& = u_{0,1};\\
  \psi^w_7& = u^2\alpha + 2uw - u_1^2 - 2u_{1,1},\\
  \psi^u_7& = - u^2u_1\alpha - 2u^2w_1 - 6uu_1w + u_1^3 - 2u_{1,2}
  + 2u_{0,1}w;\\
  \psi^w_8& = u^3\alpha + 4u^2w - uu_1^2 - 2uu_{1,1} + 2u_{0,2} +
  2u_1u_{0,1},\\
  \psi^u_8& = - 2u^3w_1 - 6u^2u_1w + 3u^2u_{0,1}\alpha + 6uu_{0,1}w + 2u_{0,3}
  - 3u_1^2u_{0,1}  \intertext{etc. and}
  \psi^w_{3/2}& = w^{-1/2},\\
  \psi^u_{3/2}& = 0;\\
  \psi^w_{1/2}& = (4w^2\alpha + 4ww_2 - 5w_1^2)/(4w^3w^{1/2}),\\
  \psi^u_{1/2}& = 2w_1/(ww^{1/2});\\
  \psi^w_{-1/2}& = (48w^4\alpha^2 + 160w^3w_2\alpha -
  64w^3w_4 - 280w^2w_1^2\alpha + 448w^2w_1w_3 + 336 w^2w_2^2 \\
  & - 1848ww_1^2w_2 + 1155w_1^4)/(48w^6w^{1/2}),\\
  \psi^u_{-1/2}& = (12w^2w_1\alpha - 8w^2w_3 + 40ww_1w_2 -
  35w_1^3)/(3w^4w^{1/2}),
\end{align*}
etc.

Similar to the case of symmetries, when one adds nonlocal variables~$s_\gamma$
an additional series of nonlocal cosymmetries arises:
\begin{align*}
  \bar\psi^w_1& = (3s_{1/2}w^{-1/2}( - 4w^2\alpha - 4ww_2 + 5w_1^2) - 2s_{-1/2}w^{-1/2}w^3 + 96tw^3 - 10ww_1)/(96w^3),\\
  \bar\psi^u_1& = ( - s_{1/2}w^{-1/2}w_1 - 4tu_1w + 2w)/(4w);\\
  \bar\psi^w_0& = (3s_{1/2}w^{-1/2}(48w^4\alpha^2 + 160w^3w_2\alpha - 64w^3w_4 - 280w^2w_1^2\alpha + 448w^2w_1w_3 \\
  & + 336w^2w_2^2- 1848ww_1^2w_2 + 1155w_1^4) + 4s_{-1/2}w^{-1/2}w^3(4w^2\alpha + 4ww_2 - 5w_1^2)  \\
  &+ 12s_{-3/2}w^{-1/2}w^6+ 2w(400w^2w_1\alpha - 276w^2w_3 + 1366ww_1w_2 - 1155w_1^3))/(16w^6),\\
  \bar\psi^u_0& = (3s_{1/2}w^{-1/2}(12w^2w_1\alpha - 8w^2w_3 + 40ww_1w_2 - 35w_1^3) + 2s_{-1/2}w^{-1/2}w^3w_1 \\
  &+ 2w( - 24w^2\alpha - 24ww_2 + 35w_1^2))/w^4;\\
  \bar\psi^w_{-1}& = (27s_{1/2}w^{-1/2}(320w^6\alpha^3 + 2240w^5w_2\alpha^2 - 1792w^5w_4\alpha + 512w^5w_6 - 5040w^4w_1^2\alpha^2 \\
  &+ 16128w^4w_1w_3\alpha - 6912w^4w_1w_5 + 12096w^4
  w_2^2\alpha - 14592w^4w_2w_4 - 8832w^4w_3^2\\
  & - 81312w^3w_1^2w_2\alpha + 52800w^3w_1^2w_4 + 181632w^3w_1w_2w_3 + 42944w^3w_2^3  \\
  & + 60060w^2w_1^4\alpha- 274560w^2w_1^3w_3 - 569712w^2w_1^2w_2^2 + 1021020ww_1^4w_2 - 425425w_1^6) \\
  &+ 18s_{-1/2}w^{-1/2}w^3(48w^4\alpha^2 + 160w^3w_2\alpha - 64w^3w_4 - 280w^2w_1^2\alpha + 448w^2w_1w_3 \\
  & + 108s_{-3/2}w^{-1/2}w^6(4w^2\alpha + 4ww_2 - 5w_1^2) + 40s_{-5/2}w^{-1/2}w^9 + 2w(69120w^4w_1\alpha^2 \\
  & + 336w^2w_2^2 - 1848ww_1^2w_2 + 1155w_1^4)+ 34160w^4w_5 + 624816w^3w_1w_2\alpha \\
  & - 349880w^3w_1w_4 - 605776w^3w_2w_3 - 665280w^2w_1^3\alpha +
  1991880w^2w_1^2w_3\\
  &+ 2772036w^2w_1w_2^2 - 7636860ww_1^3w_2 + 3828825w_1^5))/(72w^9),\\
  \bar\psi^u_{-1}& = (3s_{1/2}w^{-1/2}(240w^4w_1\alpha^2 - 320w^4w_3\alpha + 128w^4w_5 + 2240w^3w_1w_2\alpha \\
  & - 1344w^3w_1w_4 - 2240w^3w_2w_3 - 2520w^2w_1^3\alpha + 7728w^2w_1^2
  w_3 + 10416w^2w_1w_2^2 \\
  &- 29568ww_1^3w_2 + 15015w_1^5) + 4s_{-1/2}w^{-1/2}w^3(12w^2w_1\alpha - 8w^2w_3 + 40ww_1w_2  \\
  & - 35w_1^3)+ 12s_{-3/2}w^{-1/2}w^6w_1 + 2w( -384w^4\alpha^2 -
  1536w^3w_2\alpha +
  768w^3w_4 \\
  & + 3360w^2w_1^2\alpha- 5732w^2w_1w_3- 3840w^2w_2^2 + 23422ww_1^2w_2 -
  15015w_1^4))/w^7,
\end{align*}
etc.

\paragraph{(x,t)-dependent cosymmetries.}
\label{sec:x-t-dependent-1}

All them are nonlocal:
\begin{align*}
  \psi^w_5& = t( - u^2\alpha - 2uw + u_1^2 + 2u_{1,1}) + 2(s_2 +
  u_1),\\
  \psi^u_5& = t(u^2u_1\alpha + 2u^2w_1 + 6uu_1w - u_1^3 + 2u_{1,2} -
  2u_{0,1}w) + 2( - s_2u_1 - 2uw + 2u_{1,1});\\
  \psi^w_6& = t( - u^3\alpha - 4u^2w + uu_1^2 + 2uu_{1,1} - 2u_{0,2}
  - 2u_1u_{0,1}) + 2(s_2u + s_3 - 2u_{0,1}),\\
  \psi^u_6& = t(2u^3w_1 + 6u^2u_1w - 3u^2u_{0,1}\alpha - 6uu_{0,1}w
  - 2u_{0,3} + 3u_1^2u_{0,1}) + 2(s_2u_{0,1} - s_3u_1 \\
  &- u^3\alpha - 3u^2w + uu_1^2 - 3u_{0,2}),
\end{align*}
etc.

\subsection{Nonlocal forms}
\label{sec:nonlocal-forms-1}

Recall that nonlocal forms are nonlocal variables of a special type on the
$\ell$-covering. The $\ell$-covering itself is obtained from Eq.~\eqref{eq:25}
by adding two additional equations
\begin{equation*}
  \begin{cases}
    q^w_t&= - uq^w_x - 2u_xq^w - 2wq^u_x - w_xq^u,\\
    q^u_{xx}&= \alpha q^u - q^w,
  \end{cases}
\end{equation*}
where~$q^w$ and~$q^u$ are new odd variables. The total derivatives on the
$\ell$-covering are
\begin{align*}
  \tilde{D}_x&=D_x+\sum_{k\ge0}q^w_{k+1}\pd{}{q^w_k}+ \sum_{k\geq
    0}q^u_{1,k}\pd{}{q^u_{0,k}}+
  \sum_{k\ge0}\tilde{D}_t^{k}(\alpha q^u-q^w)\pd{}{q^u_{1,k}},\\
  \tilde{D}_t&=D_t-\sum_{k\ge0}\tilde{D}^k_x(uq^w_1+2u_1q^w+2wq^u_1+
  w_1q^u)\pd{}{q^w_k}+ \sum_{k\ge0}q^u_{0,k+1}\pd{}{q^u_{0,k}}+
  \sum_{k\ge0}q^u_{1,k+1}\pd{}{q^u_{1,k}}.
\end{align*}
The nonlocal form~$Q_i$ associated to a
cosymmetry~$\psi_i=(\psi_i^w,\psi_i^u)$ (see
Subsection~\ref{sec:cosymmetries-1}) is defined by the equations
\begin{align*}
  \tilde{D}_x(Q^i)& =\psi_i^wq^w,\\
  \tilde{D}_t(Q^i)& =-u\psi_i^wq^w +(D_x(\psi_i^u)-2w\psi_i^w)q^u
  -\psi_i^uq_1^u.
\end{align*}

\subsection{Nonlocal vectors}
\label{sec:nonlocal-vectors-1}

Dually to nonlocal forms, nonlocal vectors arise as special nonlocal variables
on the $\ell^*$-covering associated to symmetries of the initial equation. The
$\ell^*$-covering is the extension of Eq.~\eqref{eq:25} by two new equations
\begin{equation*}
  \begin{cases}
    p^w_t= - up^w_x + u_xp^w + p^u,\\
    p^u_{xx}= 2wp^w_x + w_xp^w + \alpha p^u,
  \end{cases}
\end{equation*}
where~$p=(p^w,p^u)$ is a new odd variable. The total deriavatives are given by
\begin{align*}
  \tilde{D}_x&=D_x+\sum_{k\ge0}p_{k+1}^w\pd{}{p^w_k} +
  \sum_{k\ge0}p_{1,k}^u\pd{}{p^u_{0,k}}+\sum_{k\ge0}\tilde{D}_t^k(2wp_1^w+w_1p^w+\alpha p^u)\pd{}{p_{1,k}^u},\\
  \tilde{D}_t&=D_t+\sum_{k\ge0}\tilde{D}_x^k(-up_1^w+u_1p^w+p^u)\pd{}{p_k^w}+
  \sum_{k\ge0}p_{0,k+1}^u\pd{}{p_{0,k}^u}+\sum_{k\ge0}p_{1,k+1}^u\pd{}{p_{1,k}^u}.
\end{align*}
The nonlocal vector~$P_i$ associated to a symmetry~$\phi=(\phi^w,\phi^u)$ (see
Subsection~\ref{sec:symmetries-1}) is defined by the equations
\begin{align*}
  \tilde{D}_x(P^i)&=\phi_i^wp^w,\\
  \tilde{D}_t(P^i)&=-(u\phi_i^w+2w\phi_i^u)p^w-D_x(\phi_i^u)p^u+\phi_i^up^u_1.
\end{align*}

\subsection{Recursion operators for symmetries}
\label{sec:recurs-oper-symm-1}

The defining equations for these operators are
\begin{equation*}
  \begin{cases}
    \tilde{D}_t(\tens{R}^w) + u\tilde{D}_x(\tens{R}^w) + 2u_1\tens{R}^w +
    2w\tilde{D}_x(\tens{R}^u) + w_1\tens{R}^u =0,\\
    \tens{R}^w + \tilde{D}_x^2(\tens{R}^u) - \alpha\tens{R}^u =0
  \end{cases}
\end{equation*}
(see Theorem~\ref{thm:nloc-ell}), where the total derivatives are those
described in Subsection~\ref{sec:nonlocal-forms-1}. The following two solutions
are essential:
\begin{align*}
  \tens{R}_{-1}^w& = ( Q^{3/2}w^{-1/2}(- 4w^2w_1\alpha + 4w^2w_3 -
  18ww_1w_2 + 15w_1^3) \\
  & - 4q_2^ww^2 + 10q_1^www_1 + q^w(4w^2\alpha + 8ww_2 - 15w_1^2))/ w^3,\\
  \tens{R}_{-1}^u& = 4( - Q^{3/2}w^{-1/2}w_1 + q^w)/w \intertext{and}
  \tens{R}_3^w& = Q^3w_1 + q_1^uw_1 + 2q^uw\alpha,\\ 
  \tens{R}_3^u& = Q^3u_1 + q^wu - q_{1,1}^u + q^uw.
\end{align*}
The corresponding operators are of the form
\begin{align*}
  \tens{R}_{-1}&=\frac{1}{8w^3}
  \begin{pmatrix}
    h_{12}w^{-1/2}D_x^{-1}\circ w^{-1/2}/2 - 4w^2D_x^2 + 10ww_1D_x + h_8&\ &0\\
    -4w^2w_1w^{-1/2}D_x^{-1}\circ w^{-1/2}/2 + 4w^2&\ &0
  \end{pmatrix}
  \intertext{and}
  \tens{R}_3&=
  \begin{pmatrix}
    w_1D_x^{-1}     &\ & w_1D_x + 2w\alpha  \\
    u_1D_x^{-1} + u &\ & -D_{xt} + w \\
  \end{pmatrix},
\end{align*}
where~$h_{12}= - 4w^2w_1\alpha + 4w^2w_3 - 18ww_1w_2 + 15w_1^3$,
$h_8=4w^2\alpha + 8ww_2 - 15w_1^2$. All other solutions obtained in our
computations corresponded to operators that are generated by the two above.

\subsection{Symplectic structures}
\label{sec:sympl-struct-1}

Symplectic structures, as it follows from Theorem~\ref{thm:nloc-ell}, are
defined by the equations
\begin{equation*}
  \begin{cases}
    -\tilde{D}_t(\tens{S}^w) - u\tilde{D}_x(\tens{S}^w) + u_1\tens{S}^w + \tens{S}^u=0,\\
    -2w\tilde{D}_x(\tens{S}^w) - w_1\tens{S}^w + \tilde{D}_x^2(\tens{S}^u) -
    \alpha\tens{S}^u=0,
  \end{cases}
\end{equation*}
where the total derivatives were defined in
Subsection~\ref{sec:nonlocal-forms-1}. Here are the simplest nontrivial
solutions:
\begin{align*}
  \tens{S}_2^w& = Q^{3/2}w^{-1/2},\\
  \tens{S}_2^u& = - q^u;\\
  \tens{S}_5^w& = Q^3 + q^u_1,\\
  \tens{S}_5^u& =  - Q^3u_1 - q^wu + q^u_{1,1} - q^uw;\\
  \tens{S}_6^w& = Q^4 + Q^3u - q^u_{0,1} + q^u_1u - q^uu_1,\\
  \tens{S}_6^u&= - Q^4u_1 + Q^3u_{0,1} - q^wu^2 - q^u_{0,2} + q^u( - u^2\alpha -
  2uw + u_1^2)
\end{align*}
with the corresponding symplectic operators
\begin{align*}
  \tens{S}_2&=
  \begin{pmatrix}
    w^{-1/2}D_x^{-1}\circ w^{-1/2}/2 &\ & 0 \\
    0 &\ & - 1
  \end{pmatrix},\\
  \tens{S}_5&=
  \begin{pmatrix}
    D_x^{-1}          &\ & D_x \\
    - u_1 D_x^{-1} - u &\ & D_{xt} - w
  \end{pmatrix},\\
  \tens{S}_6&=
  \begin{pmatrix}
    D_x^{-1}\circ u + uD_x^{-1}                  &\ & - D_t + uD_x - u_1 \\
    - u_1D_x^{-1}\circ u + u_{0,1}D_x^{-1} - u^2 &\ & - D_t^2 - u^2\alpha - 2uw +
    u_1^2
  \end{pmatrix}.
\end{align*}

\subsection{Hamiltonian structures}
\label{sec:hamilt-struct-1}

The equations that should be satisfied by a Hamiltonian operator (see
Theorem~\ref{thm:nloc-ellstar}) are
\begin{equation*}
  \begin{cases}
    \tilde{D}_t(\tens{H}^w) + u\tilde{D}_x(\tens{H}^w) + 2u_1\tens{H}^w +
    2w\tilde D_x(\tens{H}^u)
    + w_1\tens{H}^u =0,\\
    \tens{H}^w + \tilde{D}_x^2(\tens{H}^u) - \alpha\tens{H}^u =0,
  \end{cases}
\end{equation*}
where the total derivatives are from
Subsection~\ref{sec:nonlocal-vectors-1}. In particular, we found the following
solutions:
\begin{align*}
  \tens{H}_{-3}^w& =  - p_3^w + p_1^w\alpha,\\
  \tens{H}_{-3}^u& = p_1^w;\\
  \tens{H}_{-2}^w& = 2p_1^ww + p^ww_1,\\
  \tens{H}_{-2}^u& =  - p^u;\\
  \tens{H}_1^w& = P^1w_1 - 2p^www_1 + p_1^uw_1 + 2p^uw\alpha ,\\
  \tens{H}_1^u& = P^1u_1 - 2p_1^wuw - p^w(uw_1 + 2u_1w) - p_{1,1}^u+p^uw.
\end{align*}
The corresponding Hamiltonian operators are
\begin{align*}
  \tens{H}_{-3}&=
  \begin{pmatrix}
    - D_x^3 + \alpha D_x &\ & 0 \\
    D_x &\ & 0
  \end{pmatrix},\\
  \tens{H}_{-2}&=
  \begin{pmatrix}
    2w D_x + w_1 &\ & 0 \\
    0 &\ & - 1
  \end{pmatrix},\\
  \tens{H}_1&=
  \begin{pmatrix}
    w_1 D_x^{-1}\circ w_1 - 2ww_1                    &\ & w_1 D_x + 2w\alpha \\
    u_1 D_x^{-1}\circ w_1 - 2uw D_x - (uw_1 + 2u_1w) &\ & - D_{xt} + w
  \end{pmatrix}.
\end{align*}

\subsection{Recursion operators for cosymmetries}
\label{sec:recurs-oper-cosymm-1}

By Theorem~\ref{thm:nloc-ellstar}, the equation to find recursion operators
for cosymmetries are
\begin{align*}
  \begin{cases}
    -\tilde D_t(\hat{\tens{R}}^w) - u\tilde D_x(\hat{\tens{R}}^w) +
    u_1\hat{\tens{R}}^w
    + \hat{\tens{R}}^u=0,\\
    -2w\tilde D_x(\hat{\tens{R}}^w) - w_1\hat{\tens{R}}^w + \tilde
    D_x^2(\hat{\tens{R}}^u) - \alpha\hat{\tens{R}}^u=0
  \end{cases}
\end{align*}
with the total derivatives given in
Subsection~\ref{sec:nonlocal-vectors-1}. One of solutions is presented below:
\begin{align*}
  \hat{\tens{R}}_3^w& =  P^1 - 2p^ww + p^u_1,\\
  \hat{\tens{R}}_3^u& = - P^1u_1 +2p_1^wuw + p^w(uw_1 + 2u_1w) + p_{1,1}^u - p^uw
\end{align*}
The corresponding recursion operator is
\begin{equation*}
  \hat{\tens{R}}_3=
  \begin{pmatrix}
    D_x^{-1}\circ w_1 - 2w &\ & D_x \\
    -u_1D_x^{-1}\circ w_1 +2uwD_x + uw_1 + 2u_1w &\ & D_{xt} - w
  \end{pmatrix}.
\end{equation*}

\subsection{Interrelation}
\label{sec:interrelation}

We expose here basic facts on structural relations between the above described
invariants. The main one is the following
\begin{theorem}
  \label{thm:P-N}
  Recursion operator~$\tens{R}_3$ and Hamiltonian operator~$\tens{H}_{-3}$
  constitute a Poisson--Nijenhuis structure on the Camassa--Holm
  equation. Consequently\textup{,} all the
  operators~$\tens{R}_3^n\circ\tens{H}_{-3}$ are Hamiltonian and pair-wise
  compatible. In particular\textup{,}
  $\tens{R}_3\circ\tens{H}_{-3}=\alpha\tens{H}_{-2}$.
\end{theorem}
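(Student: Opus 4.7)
The plan is to establish the Poisson--Nijenhuis property by first reducing it to a bi-Hamiltonian pair statement, which is computationally much cheaper than checking the Nijenhuis torsion of~$\tens{R}_3$ directly (recall that $\tens{R}_3$ contains a nonlocal tail~$w_1 D_x^{-1}$ and~$u_1 D_x^{-1}$, whereas both~$\tens{H}_{-3}$ and~$\tens{H}_{-2}$ are purely differential of low order).

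The first step is the algebraic identity $\tens{R}_3\circ\tens{H}_{-3}=\alpha\tens{H}_{-2}$. I would simply multiply the $2\times2$ matrix operators given in Subsections~\ref{sec:recurs-oper-symm-1} and~\ref{sec:hamilt-struct-1}. The key cancellation is in the $(1,1)$-entry, where $w_1 D_x^{-1}\circ(-D_x^3+\alpha D_x)+(w_1 D_x+2w\alpha)\circ D_x$ collapses to $\alpha(2wD_x+w_1)$ after the telescoping of $D_x^{-1}\circ D_x$, so the $D_x^{-1}$-piece disappears and the composition becomes local. The other three entries follow similarly.

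The second and main step is to verify that $\tens{H}_{-3}$ and $\tens{H}_{-2}$ form a bi-Hamiltonian pair, i.e.\ that each is Hamiltonian and that $\ldb\tens{H}_{-3},\tens{H}_{-2}\rdb^{\mathrm{s}}=0$. Here I would invoke Theorem~\ref{thm:brackets}: it suffices to compute the three super-brackets of the $\tau^*$-shadows $\Phi^*_{\tens{H}_{-3}}=(-p_3^w+\alpha p_1^w,\,p_1^w)$ and $\Phi^*_{\tens{H}_{-2}}=(2p_1^w w+p^w w_1,\,-p^u)$ in the $\ell^*$-covering described in Subsection~\ref{sec:nonlocal-vectors-1}, and to check that each of $\{\Phi^*_{\tens{H}_{-3}},\Phi^*_{\tens{H}_{-3}}\}$, $\{\Phi^*_{\tens{H}_{-2}},\Phi^*_{\tens{H}_{-2}}\}$ and $\{\Phi^*_{\tens{H}_{-3}},\Phi^*_{\tens{H}_{-2}}\}$ vanishes. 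Because all three shadows are polynomial of low weight in the odd jet variables~$p^w_\sigma,p^u_\sigma$ and the total derivatives on the $\ell^*$-covering are explicit, this is a finite symbolic computation.

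Given the bi-Hamiltonian pair, the conclusion follows from the classical Magri scheme: a compatible pair of Hamiltonian operators $(\tens{H}_{-3},\alpha^{-1}\tens{R}_3\tens{H}_{-3})$ automatically yields a Poisson--Nijenhuis structure $(\tens{R}_3,\tens{H}_{-3})$, the Nijenhuis torsion of~$\tens{R}_3$ vanishes, and the iterates $\tens{R}_3^n\circ\tens{H}_{-3}$ are pairwise compatible Hamiltonian operators. The case $n=1$ specialises to the identity checked in the first step. The principal obstacle is the middle step, and in particular the cross-bracket $\{\Phi^*_{\tens{H}_{-3}},\Phi^*_{\tens{H}_{-2}}\}$: it requires carefully lifting $D_t$ to the $\ell^*$-covering, tracking signs coming from the oddness of~$p$, and using the constraint $w=\alpha u-u_{xx}$ to eliminate $w$ in favour of~$u$ (or vice versa) so that identical monomials cancel.
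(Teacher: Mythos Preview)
Your reduction to the bi-Hamiltonian pair $(\tens{H}_{-3},\tens{H}_{-2})$ is appealing, but the step where you invoke the Magri scheme does not go through. The implication ``compatible Hamiltonian pair $\Rightarrow$ Poisson--Nijenhuis'' requires the first Hamiltonian operator to be nondegenerate, so that the recursion operator is \emph{determined} as $\tens{R}=\tens{H}_{-2}\tens{H}_{-3}^{-1}$ and its Nijenhuis torsion is thereby forced to vanish. Here the representative $\tens{H}_{-3}=\left(\begin{smallmatrix}-D_x^3+\alpha D_x&0\\D_x&0\end{smallmatrix}\right)$ has vanishing second column, so the relation $\tens{R}_3\tens{H}_{-3}=\alpha\tens{H}_{-2}$ leaves the second column of $\tens{R}_3$ completely unconstrained; the three Schouten brackets you propose to check therefore say nothing about $\ldb\tens{R}_3,\tens{R}_3\rdb^{\mathrm{n}}$ or about the Magri--Morosi concomitant on the part of $\kappa$ not hit by $\tens{H}_{-3}$. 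Without those, you cannot conclude that $\tens{R}_3^n\tens{H}_{-3}$ is Hamiltonian for $n\ge 2$, which is the real content of the theorem. (A side remark: even the identity $\tens{R}_3\tens{H}_{-3}=\alpha\tens{H}_{-2}$ holds only as bivectors, not as matrix operators---compare the $(2,2)$ entries---the correcting trivial piece being $\square\circ\ell_{\mathcal{E}}^*$ with $\square=\left(\begin{smallmatrix}0&0\\-D_x^2&1\end{smallmatrix}\right)$.)

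The paper does not take this shortcut. Its proof is a direct verification of all the Poisson--Nijenhuis conditions, including the Nijenhuis bracket of the nonlocal $\tens{R}_3$, by the computational techniques of the companion article~\cite{G-K-V-TMPh} on variational PN structures. That is precisely the nonlocal shadow calculation you were trying to avoid: $\Phi_{\tens{R}_3}$ contains the nonlocal form $Q^3$, and evaluating $\{\Phi_{\tens{R}_3},\Phi_{\tens{R}_3}\}$ falls under the delicate nonlocal case of Theorem~\ref{thm:brackets} discussed in Remark~\ref{rem:nonlocal}. Your bi-Hamiltonian computation does settle the case $n\le 1$ and is a worthwhile consistency check, but to complete the argument you must either carry out that nonlocal bracket directly, or first exhibit an explicit symplectic inverse for $\tens{H}_{-3}$ and identify $\tens{R}_3$ with $\alpha\tens{H}_{-2}$ composed with it.
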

\begin{proof}
  The proof consists of direct computations using the results and techniques of
  Ref.~\cite{G-K-V-TMPh}\footnote{We are pretty sure that the
    pair~$(\tens{H}_{-3},\tens{R}_{-1})$ is also a Poisson--Nijenhuis
    structure and generates an infinite negative hierarchy of Hamiltonian
    operators, but could not prove this fact because of computer capacity
    limitations.}.
\end{proof}

A visual presentation of how symmetries are distributed over gradings is given
in Table~\ref{tab:2}.
\begin{table}[h]
\caption{Distribution of symmetries over gradings}
\label{tab:2}       
\begin{tabular}{l|ccccccccccccccccc}
\hline\noalign{\smallskip}
Gradings           &$-\frac{5}{2}$ & $-2$   & $-\frac{3}{2}$ & $-1$& $-\frac{1}{2}$& $0$&$1$&$2$&$3$&$4$&$5$&$6$&$7$&$8$\\
\noalign{\smallskip}\hline\noalign{\smallskip}
Local positive     & & & & & & &$\bullet$& $\bullet$ &&&$\bullet$&$\bullet$\\
Local negative     & $\bullet$& &$\bullet$ && & & & & \\
Nonlocal positive  & & & & & &$\bullet$ & & & $\bullet$& $\bullet$& & & $\bullet$ & $\bullet$\\
Nonlocal negative  & & $\bullet$& & $\bullet$& & & & & \\
\noalign{\smallskip}\hline
\end{tabular}
\end{table}
How to prove locality of the first two series of symmetries will be discussed
in Section~\ref{sec:conclusion}. Similar presentation for cosymmetries see in
Table~\ref{tab:3}.
\begin{table}[h]
\caption{Distribution of cosymmetries over gradings}
\label{tab:3}       
\begin{tabular}{l|ccccccccccccccccc}
\hline\noalign{\smallskip}
Gradings           &$-\frac{3}{2}$ & $-1$   & $-\frac{1}{2}$ & $0$& $\frac{1}{2}$& $1$&$\frac{3}{2}$&$2$&$3$&$4$&$5$&$6$&$7$&$8$\\
\noalign{\smallskip}\hline\noalign{\smallskip}
Local positive     & & & & & & && & $\bullet$& $\bullet$&&&$\bullet$&$\bullet$\\
Local negative     & $\bullet$& &$\bullet$ && $\bullet$ & &$\bullet$  & & \\
Nonlocal positive  & & & & & & & & $\bullet$ &&& $\bullet$ & $\bullet$ & &\\
Nonlocal negative  & & $\bullet$& & $\bullet$& & $\bullet$ & & & \\
\noalign{\smallskip}\hline
\end{tabular}
\end{table}

The action of Hamiltonian and recursion operators for symmetries (up to a
constant multiplier) is given in Diagram~\eqref{eq:28}:
\begin{equation}\xymatrixrowsep{2.5pc}
  \label{eq:28}
  \xymatrix@1{
    \dots\ar[r]_{\tens{R}_3}&\phi_{-5/2}\ar[r]_{\tens{R}_3}\ar@/{_}/[l]_{\tens{R}_{-1}}&
    \phi_{-3/2}\ar[r]_{\tens{R}_3}\ar@/{_}/[l]_{\tens{R}_{-1}}&0&
    \phi_1\ar[r]_{\tens{R}_3}\ar@/{_}/[l]_{\tens{R}_{-1}}&
    \phi_2\ar[r]_{\tens{R}_3}\ar@/{_}/[l]_{\tens{R}_{-1}}
    &\dots\ar@/{_}/[l]_{\tens{R}_{-1}}\\
    \dots&\psi_{1/2}\ar[ur]^{\tens{H}_{-2}}\ar[u]^(.3){\tens{H}_{-3}}&
    \psi_{3/2}\ar[ur]^{\tens{H}_{-2}}\ar[u]^(.3){\tens{H}_{-3}}&
    \psi_3\ar[ur]^{\tens{H}_{-2}}\ar[u]^(.3){\tens{H}_{-3}}&
    \psi_4\ar[ur]^{\tens{H}_{-2}}\ar[u]^(.3){\tens{H}_{-3}}&
    \psi_7\ar[ur]^{\tens{H}_{-2}}\ar[u]^(.3){\tens{H}_{-3}}
    &\dots
  }
\end{equation}
The action of recursion operators for cosymmetries and simplectic structures
is similar.

We shall now prove commutativity of the local hierarchies.

\begin{lemma}
  \label{lem:action}
  The symmetry~$\bar{\phi}_3$ is a positive hereditary symmetry\textup{,}
  i.e.\textup{,} its action on local symmetries\textup{,}
  $\phi\mapsto\{\bar{\phi}_3,\phi\}$\textup{,} coincides\textup{,} up to a
  multiplier\textup{,} with the one of the recursion
  operator~$\tens{R}_3$. The only symmetries that vanish under this action
  are~$\phi_{-3/2}$ and~$\phi_1$. In a similar way\textup{,} the
  symmetry~$\phi_{-1}$ is a negative hereditary symmetry and the only symmetry
  that is taken to zero under its action is~$\phi_1$.
\end{lemma}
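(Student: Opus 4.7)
The plan is to prove the hereditary property by combining direct symbolic computation on a finite set of seed symmetries with a propagation argument based on the Nijenhuis property of $\tens{R}_3$ established in Theorem~\ref{thm:P-N}. Since $\{\bar\phi_3,\cdot\}$ is a derivation of the Lie algebra of symmetries and, by the grading conventions of Section~\ref{sec:matrix-version}, raises grading by $3$ exactly as $\tens{R}_3$ does in Diagram~\eqref{eq:28}, the bracket $\{\bar\phi_3,\phi_k\}$ lies in the same homogeneous component as $\tens{R}_3(\phi_k)$; the claim that the two coincide up to a multiplier then reduces to a finite-dimensional linear identification in each grading.

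First I would verify the seed identities by direct substitution into the evolutionary-field and bracket formulas~\eqref{eq:14}--\eqref{eq:15}, using the explicit expressions for $\phi_3$ and the low-grading symmetries listed in Subsection~\ref{sec:symmetries-1}. The expected outcomes are
\begin{equation*}
\{\bar\phi_3,\phi_1\}=0,\qquad\{\bar\phi_3,\phi_{-3/2}\}=0,
\end{equation*}
identifying the kernel, together with
\begin{equation*}
\{\bar\phi_3,\phi_2\}=c_2\,\tens{R}_3(\phi_2),\qquad\{\bar\phi_3,\phi_{-5/2}\}=c_{-5/2}\,\tens{R}_3(\phi_{-5/2})
\end{equation*}
for nonzero constants $c_2,c_{-5/2}$ read off by matching leading terms. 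These four computations anchor the bases of the positive integer and local semi-integer hierarchies, respectively.

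To propagate from the seeds to the whole hierarchy I would establish a master symmetry identity of the form $\re_{\bar\phi_3}(\tens{R}_3)=\lambda\,\tens{R}_3$ for some nonzero constant $\lambda$, understood as the evolutionary field acting coefficient-wise on the entries of the matrix operator $\tens{R}_3$. Combined with the Nijenhuis property guaranteed by Theorem~\ref{thm:P-N}, this yields the recurrence
\begin{equation*}
\{\bar\phi_3,\tens{R}_3(\phi)\}=\tens{R}_3(\{\bar\phi_3,\phi\})+\lambda\,\tens{R}_3(\phi),
\end{equation*}
and iterating along the hierarchy starting from a seed $\phi_{\mathrm{seed}}$ gives $\{\bar\phi_3,\tens{R}_3^n\phi_{\mathrm{seed}}\}=(c_{\mathrm{seed}}+n\lambda)\,\tens{R}_3^{n+1}(\phi_{\mathrm{seed}})$. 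This is manifestly proportional to the action of $\tens{R}_3$ on $\tens{R}_3^n\phi_{\mathrm{seed}}$, and its coefficient vanishes only at the isolated values picking out precisely the two seeds $\phi_1$ and $\phi_{-3/2}$ already identified.

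The main technical obstacle is the verification of the master symmetry identity $\re_{\bar\phi_3}(\tens{R}_3)=\lambda\tens{R}_3$: unlike the seed bracket computations, which are bounded local manipulations, this is a global operator identity on the $\ell$-covering that must be checked consistently with the nonlocal pieces (the variable $Q^3$ and the conservation law $s_2$) appearing in $\bar\phi_3$ and in $\tens{R}_3$, and its correctness underpins the inductive extension to all gradings. The claim concerning $\phi_{-1}$, interpreted as the nonlocal symmetry $\bar\phi_{-1}$ of Subsection~\ref{sec:symmetries-1}, is established by the same three-stage scheme with $\tens{R}_3$ replaced by $\tens{R}_{-1}$: the direct computation $\{\bar\phi_{-1},\phi_1\}=0$ serves as the unique base case, whereas brackets with the remaining local symmetries reproduce $\tens{R}_{-1}$ applied to them up to nonzero scalars, so that once more a single seed generates the whole negative hierarchy and accounts for the unique vanishing.
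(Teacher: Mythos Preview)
The paper gives no written proof of this lemma; like Theorem~\ref{thm:P-N} it is presented as a computational fact. Your master-symmetry propagation scheme is a reasonable strategy in place of brute force, but the inductive step is algebraically wrong. From the recurrence $\{\bar\phi_3,\tens{R}_3(\phi)\}=\tens{R}_3(\{\bar\phi_3,\phi\})+\lambda\,\tens{R}_3(\phi)$ together with the seed relation $\{\bar\phi_3,\phi_{\mathrm{seed}}\}=c\,\tens{R}_3\phi_{\mathrm{seed}}$, one application gives
\[
\{\bar\phi_3,\tens{R}_3\phi_{\mathrm{seed}}\}=c\,\tens{R}_3^{2}\phi_{\mathrm{seed}}+\lambda\,\tens{R}_3\phi_{\mathrm{seed}},
\]
a genuine combination of two distinct hierarchy elements, not $(c+\lambda)\,\tens{R}_3^{2}\phi_{\mathrm{seed}}$; induction yields $c\,\tens{R}_3^{n+1}\phi_{\mathrm{seed}}+n\lambda\,\tens{R}_3^{n}\phi_{\mathrm{seed}}$, not $(c+n\lambda)\,\tens{R}_3^{n+1}\phi_{\mathrm{seed}}$. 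The operator identity that actually makes your telescoping work is $L_{\bar\phi_3}\tens{R}_3=\lambda\,\tens{R}_3^{2}$ (the Lie derivative, with $\tens{R}_3^{2}$ on the right), which gives the recurrence with $\lambda\,\tens{R}_3^{2}\phi$ in place of $\lambda\,\tens{R}_3\phi$ and then indeed produces coefficients $c+n\lambda$.

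There is a second gap upstream of this one. The coefficient-wise action $\re_{\bar\phi_3}(\tens{R}_3)$ you invoke is not the Lie derivative $L_{\bar\phi_3}\tens{R}_3$ needed for the recurrence; they differ by the commutator $[\ell_{\bar\phi_3},\tens{R}_3]$, and the Nijenhuis condition $\ldb\tens{R}_3,\tens{R}_3\rdb^{\mathrm n}=0$ says nothing about that commutator when $\bar\phi_3$ is an external master symmetry rather than an element of the $\tens{R}_3$-orbit. Hence the step ``combined with the Nijenhuis property this yields the recurrence'' is unjustified. With both corrections the argument can in principle be repaired, but as written it does not establish the lemma.
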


A direct corollary of this result is

\begin{theorem}
  \label{thm:commut}
  Local positive and negative symmetries form commutative hierarchies.
\end{theorem}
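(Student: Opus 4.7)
The plan is to combine Lemma~\ref{lem:action} with the graded Jacobi identity, anchored by the observation that the two lowest-grading local positive symmetries are translations.

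Observe first that $\phi_1 = (w_1, u_1)$ is literally the $x$-translation on~\eqref{eq:25}, while $\phi_2 = (uw_1 + 2u_1 w, -u_{0,1})$ is minus the $t$-translation: its $u$-component equals $-u_t$ by direct inspection, and its $w$-component equals $-w_t$ by virtue of the first equation of~\eqref{eq:25}. Every higher member of the positive local hierarchy listed in Subsection~\ref{sec:symmetries-1} is $(x, t)$-independent, so brackets with these two translations vanish: $\{\phi_1, \phi^{(k)}\} = \{\phi_2, \phi^{(k)}\} = 0$ for all local positive $\phi^{(k)}$.

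Enumerate the positive local hierarchy as $\phi^{(0)} = \phi_1$, $\phi^{(1)} = \phi_2$, $\phi^{(2)} = \phi_5$, $\phi^{(3)} = \phi_6$, and so on; by Lemma~\ref{lem:action}, $\{\bar{\phi}_3, \phi^{(k)}\} = c_k \phi^{(k+1)}$ with nonzero scalars $c_k$ for $k \geq 1$. The graded Jacobi identity applied to $\bar{\phi}_3, \phi^{(m)}, \phi^{(n)}$ yields
\begin{equation*}
  c_n \{\phi^{(m)}, \phi^{(n+1)}\} + c_m \{\phi^{(m+1)}, \phi^{(n)}\}
  = \{\bar{\phi}_3, \{\phi^{(m)}, \phi^{(n)}\}\};
\end{equation*}
induction on $m + n$ makes the right-hand side vanish by the inductive hypothesis, and the resulting linear recursion, together with the anchor equations $\{\phi_1, \cdot\} = \{\phi_2, \cdot\} = 0$ and antisymmetry of the Lie bracket, drives every bracket of total index $m + n + 1$ to zero. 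The negative local hierarchy is handled by the identical scheme with $\phi_{-1}$ in the role of master symmetry, the base cases being read off directly from the explicit $(x, t)$-independent formulas of $\phi_{-3/2}$ and $\phi_{-5/2}$ in Subsection~\ref{sec:symmetries-1}.

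The hard part will be (i)~justifying that every higher local positive symmetry is indeed $(x, t)$-independent --- a locality claim the authors defer to Section~\ref{sec:conclusion} --- and (ii)~checking that the two anchor conditions, combined with the Jacobi recursion, genuinely pin down every bracket of fixed total index to zero. The latter is a short linear-algebra exercise that relies on the nonvanishing of the $c_k$ for $k \geq 1$ supplied by Lemma~\ref{lem:action}; one may alternatively bypass it by invoking the Magri--Fuchssteiner theorem applied to the Nijenhuis operator $\tens{R}_3$, whose Nijenhuis property is part of Theorem~\ref{thm:P-N}.
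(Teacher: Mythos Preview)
Your argument is correct and is precisely the standard master-symmetry-plus-Jacobi unpacking of what the paper records only as ``a direct corollary'' of Lemma~\ref{lem:action}; the paper supplies no further detail, so your route is the intended one. One small simplification: for the negative hierarchy all the multipliers are nonzero (only $\phi_1$ lies in the kernel of $\mathrm{ad}_{\bar\phi_{-1}}$), so antisymmetry alone seeds the Jacobi recursion and no separate base computation for $\phi_{-3/2},\phi_{-5/2}$ is required.
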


\section{The scalar version}
\label{sec:scalar-version}

Let us consider now the Camassa--Holm equation in its initial
form~\eqref{eq:1} with~$\mu=0$ and, similar to the matrix case introduce a new
real parameter~$\alpha$:
\begin{equation}
  \label{eq:27}
  \alpha u_t - u_{txx} + 3\alpha uu_x = 2u_xu_{xx} + uu_{xxx}.
\end{equation}
For the internal coordinates we choose the functions
\begin{equation*}
  u_l=\pd{^lu}{x^l},\quad u_{l,k}=\pd{^{k+l}u}{x^l\partial t^k},\qquad
  l=0,1,2,\ k\ge 1.
\end{equation*}
The total derivatives in these coordinates are of the form
\begin{align*}
  D_x&=\pd{}{x}+\sum_{l=0}^2u_{l+1}\pd{}{u_l}+
  \sum_{k\ge1}\left(u_{1,k}\pd{}{u_{0,k}}+u_{2,k}\pd{}{u_{1,k}}+
    D_t^{k}(u_3)\pd{}{u_{2,k}}\right),\\
  D_t&=\pd{}{t}+\sum_{l=0}^2u_{l,1}\pd{}{u_l}+
  \sum_{l=0}^2\sum_{k\ge1}u_{l,k+1}\pd{}{u_{l,k}},
\end{align*}
where~$u_3=(\alpha u_{0,1} - u_{2,1} + 3\alpha uu_1 - 2u_1u_2)/u$. The
equation becomes homogeneous if assign the following gradings:
\begin{equation*}
  \abs{x}= -1,\ \abs{t}= -2,\ \abs{u}= 1,\ \abs{\alpha}=2.
\end{equation*}

\subsection{Nonlocal variables}
\label{sec:nonlocal-variables}

We introduce nonlocal variable associated to conservation laws of the equation
at hand:
\begin{align*}
  (s_2)_x& = u\alpha - u_2,\\
  (s_2)_t& = ( - 3u^2\alpha + 2uu_2 + u_1^2)/2;\\
  (s_3)_x& = u(u\alpha - u_2),\\
  (s_3)_t& =  - 2u^3\alpha + 2u^2u_2 + uu_{1,1} - u_{0,1}u_1;\\
  (s_6)_x& = u^3\alpha^2 - 2u^2u_2\alpha - uu_{1,1}\alpha + uu_2^2 +
  u_{1,1}u_2,\\
  (s_6)_t& = ( - 9u^4\alpha^2 + 12u^3u_2\alpha + 6u^2u_1^2\alpha +
  4u^2u_{1,1}\alpha - 4u^2u_2^2 - 8u u_{0,1}u_1\alpha\\
  & - 4uu_{0,2}\alpha - 4uu_1^2u_2 + 4uu_1u_{2,1} - 4uu_{1,1}u_2 +
  4u_{0,1}u_1u_2 - u_1^4 + 4u_1 u_{1,2})/4;\\
  (s_7)_x& = 60u^4\alpha^2 - 116u^3u_2\alpha - 12u^2u_1^2\alpha -
  40u^2u_{1,1}\alpha + 56u^2u_2^2 - 6u u_{0,1}u_1\alpha \\
  & + 28uu_{0,2}\alpha + 12uu_1^2u_2 + 10uu_1u_{2,1} + 40uu_{1,1}u_2
  + 2uu_{2,2} + 2u_{0,1}^2\alpha - 4 u_{0,1}u_1u_2 \\
  & - 2u_{0,1}u_{2,1} - 30u_{0,2}u_2 + 27u_1^2u_{1,1},\\
  (s_7)_t& = - 144u^5\alpha^2 + 240u^4u_2\alpha + 48u^3u_1^2\alpha
  + 124u^3u_{1,1}\alpha - 96u^3u_2^2 \\
  & - 156 u^2u_{0,1}u_1\alpha - 124u^2u_{0,2}\alpha - 48u^2u_1^2u_2
  + 8u^2u_1u_{2,1} - 88u^2u_{1,1}u_2 + 28u^2 u_{2,2} \\
  & - 56uu_{0,1}^2\alpha + 112uu_{0,1}u_1u_2 + 38uu_{0,1}u_{2,1} +
  78uu_{0,2}u_2 - 36uu_1^2u_{1,1} + 18 uu_1u_{1,2}\\
  & + 18uu_{1,1}^2 + 30uu_{1,3} - 18u_{0,1}^2u_2 + 36u_{0,1}u_1^3 +
  36u_{0,1}u_1u_{1,1} - 18 u_{0,1}u_{1,2} + 9u_{0,2}u_1^2 \\
  & + 18u_{0,2}u_{1,1} - 30u_{0,3}u_1.
\end{align*}

\subsection{Symmetries}
\label{sec:symmetries}

A symmetry~$\phi$ must satisfy the linearized equation
\begin{equation*}
  \alpha D_t(\phi) -D^2_xD_t(\phi) - uD_x^3(\phi) - 2u_1D_x^2(\phi)
  + (3\alpha u - 2u_2)D_x(\phi) + (3\alpha u_1 - u_3)\phi=0.
\end{equation*}

We computed two types of symmetries. Everywhere below the subscript was chosen
in a way to correspond the enumeration taken for the matrix case.

\paragraph{$(x,t)$-independent symmetries.}
\label{sec:x-t-independent-2}

We present the first four of them:
\begin{align*}
  \phi_{1}& = u_1,\\
  \phi_{2}& =  - u_{0,1},\\
  \phi_{5}& = 3u^2u_1\alpha - 4uu_{0,1}\alpha - 2uu_1u_2 + 2uu_{2,1}
  + 2u_{0,1}u_2 - u_1^3 + 2u_{1,2},\\
  \phi_{6}& = 2u^3u_1\alpha - 11u^2u_{0,1}\alpha - 2u^2u_1u_2 + 2u^2u_{2,1} +
  6uu_{0,1}u_2 + 3u_{0,1}u_1^2 - 2u_{0,3}.
\end{align*}
All these symmetries are local.

\paragraph{$(x,t)$-dependent symmetries.}
\label{sec:x-t-dependent-2}

The first three $(x,t)$-dependent symmetries are
\begin{align*}
  \phi_{0}& = tu_{0,1} + u,\\
  \phi_{3}& = 2s_2u_1 + t( - 3u^2u_1\alpha + 4uu_{0,1}\alpha +
  2uu_1u_2 - 2uu_{2,1} - 2u_{0,1}u_2 + u_1^3 - 2 u_{1,2}) \\
  & + 4(u^2\alpha - uu_2 - u_{1,1}),\\
  \phi_{4}& = - 2s_2u_{0,1} + 2s_3u_1 + t( - 2u^3u_1\alpha +
  11u^2u_{0,1}\alpha + 2u^2u_1u_2 - 2u^2 u_{2,1} \\
  & - 6uu_{0,1}u_2 - 3u_{0,1}u_1^2 + 2u_{0,3}) + 2(4u^3\alpha - 3u^2u_2 -
  uu_1^2 + 3u_{0,2})
\end{align*}
The only local symmetry in this series is~$\phi_1$.

\subsection{Cosymmetries}
\label{sec:cosymmetries}

The defining equation for cosymmetries is
\begin{equation*}
  -\alpha D_t(\psi) + D_x^2D_t(\psi) + uD_x^3(\psi) + u_1D_x^2(\psi)
  + (u_2 - 3\alpha u)D_x(\psi) =0.
\end{equation*}

\paragraph{$(x,t)$-independent cosymmetries.}
\label{sec:x-t-independent-3}

These cosymmetries are local. The first four of them are
\begin{align*}
  \psi_{3}& = 1,\\
  \psi_{4}& = u,\\
  \psi_{7}& = 3u^2\alpha - 2uu_2 - u_1^2 - 2u_{1,1},\\
  \psi_{8}& = 5u^3\alpha - 4u^2u_2 - uu_1^2 - 2uu_{1,1} + 2u_{0,1}u_1 +
  2u_{0,2}.
\end{align*}

\paragraph{$(x,t)$-dependent cosymmetries.}
\label{sec:x-t-dependent-3}

These cosymmetries are nonlocal, except for the first one:
\begin{align*}
  \bar\psi_2 &= -tu,\\
  \bar\psi_{5}& = 2s_2 + t( - 3u^2\alpha + 2uu_2 + u_1^2 + 2u_{1,1})
  + 2u_1,\\
  \bar\psi_{6}& = 2s_2u + 2s_3 + t( - 5u^3\alpha + 4u^2u_2 + uu_1^2 +
  2uu_{1,1} - 2u_{0,1}u_1 - 2u_{0,2}) - 4 u_{0,1},
\end{align*}
etc.

\subsection{Nonlocal forms}
\label{sec:nonlocal-forms}

Nonlocal forms arise in the $\ell$-covering which is given by the equation
\begin{equation*}
  \alpha q_t - q_{xxt} - uq_{xxx} - 2u_xq_{xx}
  + (3\alpha u - 2u_{xx})q_x + (3\alpha u_x - u_{xxx})q=0
\end{equation*}
with the total derivatives
\begin{align*}
  \tilde{D}_x&=D_x+\sum_{l=0}^2q_{l+1}\pd{}{q_l}+
  \sum_{k\ge1}\left(q_{1,k}\pd{}{q_{0,k}} + q_{2,k}\pd{}{q_{1,k}}+
    D_t^{k}(q_3)\pd{}{q_{2,k}}\right),\\
  \tilde{D}_t&=D_t + \sum_{l=0}^2q_{l,1}\pd{}{q_l} +
  \sum_{l=0}^2\sum_{k\ge1}q_{l,k+1}\pd{}{q_{l,k}},
\end{align*}
where~$q_3=(\alpha q_{0,1} - q_{2,1} + (3u_1\alpha - u_3)q + (3u\alpha -
2u_2)q_1 - 2u_1q_2)/u$.

To any cosymmetry~$\psi_i$ we associate a nonlocal form~$Q^i$ defined by the
equations
\begin{align*}
  \tilde D_x(Q^i)&=(\alpha\psi_i - D_x^2(\psi_i))q,\\
  \tilde D_t(Q^i)&=((u_2 - 3u\alpha)\psi_i + uD_x^2(\psi_i))q^u + (u_1\psi_i -
  uD_x(\psi_i))q_1 + u\psi_iq_2
  \\ &  - D_x(\psi_i)q_{0,1} + \psi_iq_{1,1}.
\end{align*}

\subsection{Nonlocal vectors}
\label{sec:nonlocal-vectors}

Nonlocal vectors arise in the $\ell^*$-covering. The latter is the extension
of the initial equation by the equation
\begin{equation*}
  -\alpha p_t +  p_{xxt} + up_{xxx} + u_xp_{xx} + (u_{xx} - 3\alpha u) p_x =0
\end{equation*}
with the total derivatives
\begin{align*}
  \tilde{D}_x&=D_x+\sum_{l=0}^2p_{l+1}\pd{}{p_l}+
  \sum_{k\ge1}\left(p_{1,k}\pd{}{p_{0,k}} + p_{2,k}\pd{}{p_{1,k}}+
    D_t^{k}(p_3)\pd{}{p_{2,k}}\right),\\
  \tilde{D}_t&=D_t+\sum_{l=0}^2p_{l,1}\pd{}{p_l} +
  \sum_{l=0}^2\sum_{k\ge1}p_{l,k+1}\pd{}{p_{l,k}},
\end{align*}
where $p_3=((3u\alpha - u_2)p_1 - u_1p_2 + \alpha p_{0,1} - p_{2,1})/u$.

To any symmetry~$\phi_i$ there corresponds a nonlocal vector~$P^i$ defined by
the relations
\begin{align*}
  (P^i)_x&= (\phi_i\alpha - D_x^2(\phi_i))p,\\
  (P^i)_t &= ((u_2 - 3u\alpha)\phi_i + u_1D_x(\phi_i) + uD_x^2(\phi_i))p
  - uD_x(\phi_i)p_1 + u\phi_ip_2 \\
  &- D_x(\phi_i)p_{0,1} + \phi_ip_{1,1}.
\end{align*}

\subsection{Recursion operators for symmetries}
\label{sec:recurs-oper-symm}

The defining equation for these operators is
\begin{equation*}
  \alpha \tilde{D}_t(\tens{R}) -\tilde{D}^2_x\tilde{D}_t(\tens{R})
  - u\tilde{D}_x^3(\tens{R}) - 2u_1\tilde{D}_x^2(\tens{R})
  + (3\alpha u - 2u_2)\tilde{D}_x(\tens{R}) + (3\alpha u_1 - u_3)\tens{R}=0
\end{equation*}
where the total derivatives are those presented in
Subsection~\ref{sec:nonlocal-forms}. We consider here two nontrivial
solutions:
\begin{align*}
  \tens{R}_{-1} &= (4Q^{3/2}(u\alpha - u_2 )^{-5/2}u(2u^2u_1\alpha^2 +
  uu_{0,1}\alpha^2 - 4uu_1u_2\alpha - uu_{2,1}\alpha \\
  & - u_{0,1}u_2\alpha + 2 u_1u_2^2 + u_2u_{2,1}) - 4q_2u^2(u\alpha -
  u_2)^{-1}
  + 4q_1(u\alpha - u_2 )^{-3}u( - 2u^2u_1\alpha^2 \\
  & - u u_{0,1}\alpha^2 + 4uu_1u_2\alpha + uu_{2,1}\alpha +
  u_{0,1}u_2\alpha - 2u_1u_2^2 - u_2u_{2,1}) \\
  & + 2q((u\alpha - u_2)^{-3}(4u^2 u_1^2\alpha^2 +
  4uu_{0,1}u_1\alpha^2 - 8uu_1^2u_2\alpha - 4uu_1u_{2,1}\alpha \\
  & + u_{0,1}^2\alpha^2 - 4u_{0,1}u_1u_2\alpha -
  2u_{0,1}u_{2,1}\alpha + 4u_1^2u_2^2 + 4u_1u_2u_{2,1}+ u_{2,1}^2) \\
  & + 2u^2\alpha(u\alpha- u_2)^{-1}))/u^2
  \intertext{and}
  \tens{R}_{3} &= Q^3u_1 - q_{1,1} - q_2u - q_1u_1 + q(2u\alpha - u_2).
\end{align*}
The operator corresponding to the second one is
\begin{equation*}
  \tens{R}_3= \alpha u_1D_x^{-1} - D_{xt} - uD_x^2  - u_1D_x + 2u\alpha-u_2.
\end{equation*}
The first operator is too complicated to present it here.

\subsection{Symplectic structures}
\label{sec:sympl-struct}

The equation defining symplectic structures is
\begin{equation*}
  -\alpha\tilde{D}_t(\tens{S}) + \tilde{D}_x^2D_t(\tens{S}) + u\tilde{D}_x^3(\tens{S})
  + u_1\tilde{D}_x^2(\tens{S}) + (u_2 - 3\alpha u)\tilde{D}_x(\tens{S}) =0,
\end{equation*}
where the total derivatives are from Subsection~\ref{sec:nonlocal-forms}. The
simplest solutions are
\begin{align*}
  \tens{S}_{5} &= Q^3,\\
  \tens{S}_{6} &= Q^4 + Q^3u - q_{0,1} - q_1u
\end{align*}
while the corresponding operators have the form
\begin{align*}
  \tens{S}_5& = \alpha D_x^{-1},\\
  \tens{S}_6& = D_x^{-1}\circ(\alpha u - u_2) + \alpha uD_x^{-1} - D_t - uD_x.
\end{align*}

\subsection{Hamiltonian structures}
\label{sec:hamilt-struct}

Hamiltonian structures are defined by the equation
\begin{equation*}
  \alpha\tilde{D}_t(\tens{H}) -\tilde{D}^2_x\tilde{D}_t(\tens{H}) -
  u\tilde{D}_x^3(\tens{H}) - 2u_1\tilde{D}_x^2(\tens{H})
  + (3\alpha u - 2u_2)\tilde{D}_x(\tens{H}) + (3\alpha u_1 - u_3)\tens{H}=0,
\end{equation*}
where the total derivatives are given in
Subsection~\ref{sec:nonlocal-vectors}. We present here three solutions. Two of
them are local and the third one is nonlocal:
\begin{align*}
  \tens{H}_{-3} &= p_1,\\
  \tens{H}_{-2} &=  - p_{0,1} - p_1u + pu_1,\\
  \tens{H}_1 &= P^1u_1 - p_{1,2} - p_{2,1}u + p_{0,1}u\alpha -
  p_2u_{0,1} + p_1u( - u\alpha + u_2) \\
  & + p( - uu_1\alpha + u_{0,1}\alpha + u_1u_2)
\end{align*}
The corresponding operators are
\begin{align*}
  \tens{H}_{-3}& = D_x,\\
  \tens{H}_{-2}& =  - D_t - uD_x + u_1,\\
  \tens{H}_1&= u_1D_x^{-1}\circ(( - 2uu_1\alpha - u_{0,1}\alpha + 2u_1u_2 +
  u_{2,1})/u) - uD_x^2D_t - D_xD_t^2 +  u\alpha D_t \\
  & - u_{0,1}D_x^2 + u( - u\alpha + u_2)D_x - 4uu_1\alpha + uu_3 + 3u_1u_2 +
  u_{2,1}.
\end{align*}

\subsection{Recursion operators for cosymmetries}
\label{sec:recurs-oper-cosymm}

These recursion operators are defined by the equation
\begin{align*}
  -\alpha\tilde{D}_t(\hat{\tens{R}}) + \tilde{D}_x^2D_t(\hat{\tens{R}})
  + u\tilde{D}_x^3(\hat{\tens{R}}) + u_1\tilde{D}_x^2(\hat{\tens{R}})
  + (u_2 - 3\alpha u)\tilde{D}_x(\hat{\tens{R}}) =0,
\end{align*}
where the total derivatives are given in
Subsection~\ref{sec:nonlocal-vectors}.  One of the solutions is
\begin{equation*}
  \hat{\tens{R}}_3 = P^1 + p_{1,1} + p_2u  - 2u\alpha + u_2
\end{equation*}
to which the operator
\begin{equation*}
  \hat{\tens{R}}_3= D_x^{-1}\circ(( - 2uu_1\alpha - u_{0,1}\alpha +
  2u_1u_2 + u_{2,1})/u) + D_{xt} + uD_x^2 - 2u\alpha + u_2
\end{equation*}
corresponds.

\section{Conclusion}
\label{sec:conclusion}

We finish this paper with a number of remarks.

\begin{remark}
  \label{rem:what-has-been}
  First of all, let us stress again what has been done above. We treated the
  Camassa--Holm equation directly, without artificial assumptions about its
  evolution or pseudo-evolu\-tion nature. The pass from the original
  form~\eqref{eq:1} to system~\eqref{eq:25} (which is \emph{not} in evolution
  form) was done by technical reasons only. On this way, we found an infinite
  family of pair-wise compatible Hamiltonian structures, recursion operators
  for symmetries and cosymmetries, and symplectic operators. These structures
  lead to existence of two commutative series of local symmetries and
  conservation laws and thus the equation is integrable.
\end{remark}

\begin{remark}
  \label{rem:nonlocal}
  Several comments are worth to be made in relation to
  Theorem~\ref{thm:brackets}. Two cases must be distinguished: local and
  nonlocal. In both cases the problem reduces to \emph{reconstruction} of
  nonlocal shadows up to symmetries and subsequent computation of the bracket.

  \paragraph*{Local case.}
  \label{sec:local-case}

  This case is simple and the result is as follows: any shadow in the
  $\ell$-covering corresponding to a recursion operator can be canonically
  lifted up to a symmetry of the covering equation and the commutator of these
  symmetries corresponds to the Nijenhuis bracket of the operators. A similar
  result is valid for the shadows in the $\ell^*$-covering corresponding to
  local bivectors (in this case the commutator of symmetries is related to the
  Schouten bracket).

  \paragraph*{Nonlocal case.}
  \label{sec:nonocal-case}

  The case of nonlocal operators is much more complicated and rests on the
  problem of how to commute nonlocal shadows. Seemingly there exists no
  \emph{natural} definition of such a commutator, but in
  Ref.~\cite{G-K-V-MIIGA} we proposed a procedure both to reconstruct and
  commute shadows. A weak point of this procedure is that it is based on
  rather cumbersome and intuitively not obvious notion of shadow
  equivalence. Dealing with equivalence classes of shadows necessitates
  elimination of certain kind of nonlocal variables that we call
  \emph{pseudo-constants} and that are intrinsically related to the covering
  at hand. This not always simple to do and we plan to simplify and clarify
  the procedure.
\end{remark}

\begin{remark}
  \label{rem:comp-two-repr}
  Of course, all invariants of the Camassa--Holm equation in its initial form
  can be obtained from those computed for the system by simple
  transformations. E.g., to obtain symmetries of Eq.~\eqref{eq:27} from those
  of~\eqref{eq:25} one should substitute in the $u$-component the variables
  $w_l$ by $D_x^l(u\alpha - u_2)$.  On the other way, any cosymmetry in the
  scalar case can be constructed directly from the corresponding cosymmetry of
  Eq.~\eqref{eq:25} by changing the variables~$w_l$ in the $w$-component
  by~$D_x^l(u\alpha - u_2)$. Similar transformations are applicable to other
  structures.
\end{remark}

\begin{remark}
  \label{rem:V-trick}
  Computation of symmetries and cosymmetries in the matrix representation can
  be simplified using the following observation:
  \begin{proposition}
    The following correspondence between the components of symmetries and
    cosymmetries of matrix equation~\eqref{eq:25} is valid\textup{:}
    \begin{equation*}\xymatrixcolsep{3.5pc}
      \xymatrix{
        \psi_k^w\ar@{|-{>}}[r]^{D_t+uD_x-u_1}&\psi_k^u\ar@{|-{>}}[r]^{\cdot(-1)}&
        \phi_{k-2}^u\ar@{|-{>}}[r]^{\alpha-D_x^2}&\phi_{k-2}^w.
      }
    \end{equation*}
  \end{proposition}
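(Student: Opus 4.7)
The plan is to identify the composed map in the diagram with the action of the Hamiltonian operator $\tens{H}_{-2}$ from Subsection~\ref{sec:hamilt-struct-1}, thereby reducing the proposition to the variational bivector property already established there. I break the chain into three pieces.

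\textbf{The two outer arrows are tautological.} The first arrow $\psi_k^w \mapsto (D_t + uD_x - u_1)\psi_k^w$ simply extracts $\psi_k^u$ from the first equation of the cosymmetry system displayed in Subsection~\ref{sec:cosymmetries-1}, which reads $\psi^u = D_t(\psi^w) + uD_x(\psi^w) - u_1\psi^w$. Similarly, the third arrow $\phi_{k-2}^u \mapsto (\alpha - D_x^2)\phi_{k-2}^u$ yields $\phi_{k-2}^w$ by solving the second equation of the symmetry system in Subsection~\ref{sec:symmetries-1}, which reads $\phi^w = (\alpha - D_x^2)\phi^u$. These two steps require no computation beyond rewriting the defining equations.

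\textbf{The middle arrow is the substantive claim.} Given a cosymmetry $(\psi^w, \psi^u)$, set
\begin{equation*}
    (\phi^w, \phi^u) := \bigl(-(\alpha - D_x^2)\psi^u,\; -\psi^u\bigr).
\end{equation*}
Using the second cosymmetry equation $D_x^2(\psi^u) - \alpha\psi^u = 2wD_x(\psi^w) + w_1\psi^w$, the first component rewrites as $\phi^w = 2wD_x(\psi^w) + w_1\psi^w$, whence $(\phi^w,\phi^u) = \tens{H}_{-2}\psi$. Since $\tens{H}_{-2}$ was shown in Subsection~\ref{sec:hamilt-struct-1} to satisfy the variational bivector relation~\eqref{eq:6}, it sends cosymmetries to symmetries. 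The grading shift $k \mapsto k-2$ is built into the subscript convention of $\tens{H}_{-2}$ and is directly verifiable on the listed examples: $\tens{H}_{-2}\psi_3 = (w_1, u_1) = \phi_1$, $\tens{H}_{-2}\psi_4 = (uw_1 + 2u_1 w, -u_{0,1}) = \phi_2$, and so on.

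\textbf{Expected obstacle.} There is essentially none, since the proposition is a componentwise repackaging of the bivector property of $\tens{H}_{-2}$ that has already been verified. The only nontrivial verification hidden inside the chain is the first symmetry equation for the pair $(\phi^w,\phi^u) = \tens{H}_{-2}\psi$, and this is automatic from $\ell_{\mathcal{E}} \circ \tens{H}_{-2} = \tens{H}_{-2}^* \circ \ell_{\mathcal{E}}^*$. An alternative self-contained proof, bypassing $\tens{H}_{-2}$ entirely, would substitute $(\phi^w,\phi^u)$ into the first symmetry equation and simplify using both cosymmetry equations together with the evolution $w_t = -2u_x w - u w_x$; this is routine but noticeably more tedious.
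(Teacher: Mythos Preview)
Your proof is correct, and in fact the paper does not supply one: the proposition is stated bare inside Remark~\ref{rem:V-trick} as a computational observation, with no argument attached. Your identification of the composed map with the Hamiltonian operator~$\tens{H}_{-2}$ is exactly the right explanation and makes precise what the paper records only schematically in Diagram~\eqref{eq:28}, where the arrows $\psi_k\xrightarrow{\tens{H}_{-2}}\phi_{k-2}$ already appear (up to constant multipliers). The two outer arrows are, as you say, rewritings of the first cosymmetry equation and the second symmetry equation, respectively; the middle arrow then follows from the bivector property~\eqref{eq:6} for~$\tens{H}_{-2}$ established in Subsection~\ref{sec:hamilt-struct-1}, together with the second cosymmetry equation to rewrite $-(\alpha-D_x^2)\psi^u$ as $(2wD_x+w_1)\psi^w$. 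So your argument does more than the paper: it explains the proposition rather than merely asserting it.
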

\end{remark}

\begin{remark}
  \label{rem:local-ov-hier}
  Now we shall show how to prove locality of symmetry hierarchies. Let us
  introduce a new variable~$v=w^2$. Then Eq.~\eqref{eq:25} transforms to
  \begin{equation}
    \label{eq:29}
    \begin{cases}
      v_t&=-u_xv-uw_x,\\
      v^2&=\alpha u-u_{xx},
    \end{cases}
  \end{equation}
  i.e., the first equation acquires potential form\footnote{This fact is
    related to existence of the nonlocal variable~$s_{1/2}$, see above.}. This
  means that for any symmetry~$\phi=(\phi^v,\phi^u)$ the
  form~$\omega_\phi=\phi^v\,dx-(u\phi^v+v\phi^u)\,dt$ is a conservation law of
  Eq.~\eqref{eq:29}. Comparing gradings, it is easily checked that all
  conservation laws of the form~$\omega_\phi$ are trivial and
  consequently~$\phi^v$ lies in the image of~$D_x$ for any
  symmetry~$\phi$. From this fact it follows that the action of recursion
  operators is local.
\end{remark}

\begin{remark}
  \label{sec:relations-n-m}
  Finally, we indicate relations between the Camassa--Holm equation and the
  equation describing short capillary-gravity waves (the Neveu--Manna
  equation, see~\cite{B,N-M})
  \begin{equation*}
    u_{xy }=u-u u_{xx}-\frac{1}{2}u_x^2+\frac{\lambda}{2} u_{xx} u_x^2.
  \end{equation*}
  We strongly believe that there exists a deformation connecting these two
  equations and their integrable structures are closely related to each
  other. We intend to discuss this relation elsewhere.
\end{remark}


\begin{thebibliography}{99}

\bibitem{A-G-Z} Aratyn H., Gomes J.\,F., Zimerman A.\,H., On a negative flow
  of the AKNS hierarchy and its relation to a two-component Camassa--Holm
  equation, SIGMA (Symmetry, Integrability and Geometry: Methods and
  Applications), \textbf{2}, 070 (2006), arXiv:hep-th/0610182.

\bibitem{B}
  Baran H., Can we always distinguish between positive and negative
  hierarchies? J.\ Phys.\ A: Math.\ Gen., \textbf{38}, pp.~L301--L306 (2005).

\bibitem{B-Ch-D-AMS} Bocharov, A.\,V., Chetverikov, V.\,N., Duzhin, S.\,V., et
  al., Symmetries and Conservation Laws for Differential Equations of
  Mathematical Physics, xiv+333~pp. Amer.\ Math.\ Soc., Providence, RI (1999).
  Edited and with a preface by I.~Krasil{\cprime}shchik and A.~Vinogradov.

\bibitem{C-H} Camassa R., Holm D.\,D., An integrable shallow water equation
  with peaked solitons, Phys.\ Rev.\ Lett., \textbf{71}, pp.~1661--1664
  (1993), arXiv:patt-sol/9305002.

\bibitem{C-L-O-P} Casati P., Lorenzoni P., Ortenzi G., Pedroni M., On the
  local and nonlocal Camassa--Holm hierarchies, J.\ Math.\ Phys., \textbf{46},
  042704 (2005).

\bibitem{C-I} Constantin A., Ivanov R.\,I., Poisson structure and action-angle
  variables for the Camassa--Holm equation, Lett.\ Math.\ Phys., \textbf{76},
  pp.~93--108 (2006), arXiv:nlin/0602049.

\bibitem{F-S} Fisher M., Schiff J., The Camassa--Holm equation: conserved
  quantities and the initial value problem, Phys.\ Lett.\ A, \textbf{259},
  pp.~371--37 (1999), arXiv:solv-int/9901001.

\bibitem{G-K-V-TMPh} Golovko V., Krasil{\cprime}shchik I., Verbovetsky A.,
  Variational Poisson--Nijenhuis structures for partial differential
  equations, Theor.\ Math.\ Phys,\ \textbf{154}, no.~2 (2008), to appear.


\bibitem{G-K-V-MIIGA} Verbovetsky A., Golovko V., Krasil{\cprime}shchik I.,
  \emph{Lie bracket for nonlocal shadows}, Scientific Bulletin of MSTUCA,
  \textbf{114}, no.~4 (2007), pp.~9--23 (in Russian).

\bibitem{G} Guthrie G.A., More non-local symmetries of the KdV equation, J.\
  Phys. A: Math.\ Gen., \textbf{26}, pp.~L905--L908 (1993).

\bibitem{I-V-V} Igonin S., Verbovetsky A., Vitolo R., Variational multivectors
  and brackets in the geometry of jet spaces, V Int.\ Conf.\ on on Symmetry in
  Nonlinear Mathematical Physics, Kyiv 2003 Inst.\ of Mathematics, Nat.\
  Acad.\ of Sciences of Ukraine, pp.~1335-1342. See also
  \url{http://www.math.utwente.nl/publicaties/2002/1641abs.html}.

\bibitem{K-K-V-lstar} Kersten P., Krasil{\cprime}shchik I., Verbovetsky A.,
  Hamiltonian operators and $\ell^\ast$-coverings, J.\ Geom.\ and Phys.\
  \textbf{50}, pp.~273--302 (2004), arXiv:math.DG/0304245.

\bibitem{K-K-V-dB} Kersten P., Krasil{\cprime}shchik I., Verbovetsky A., A
  geometric study of the dispersionless Boussinesq type equation, Acta
  Appl.\ Math.\ \textbf{90}, pp.~143--178 (2005), arXiv:nlin.SI/0511012.

\bibitem{K-K-V-sup} Kersten P., Krasil{\cprime}shchik I., Verbovetsky A.,
  (Non)local Hamiltonian and symplectic structures, recursions, and
  hierarchies: a new approach and applications to the $N=1$ supersymmetric KdV
  equation, J.\ Phys.\ A: Math. Gen., \textbf{37}, pp.~5003--5019 (2004),
  arXiv:nlin/0305026.

\bibitem{K-V} Krasil{\cprime}shchik I., Vinogradov A., Nonlocal trends in the
  geometry of differential equations: symmetries, conservation laws, and
  B\"{a}cklund transformations, Acta Appl.\ Math., \textbf{15}, pp.~161--209
  (1989).

\bibitem{K} Kupershmidt B., Dark equations, J.\ Nonlinear Math.\ Phys.,
  \textbf{8}, pp.~363--445 (2001), arXiv:nlin/0107076v1.

\bibitem{L} Lenells J., Conservation laws of the Camassa-–Holm equation,
  J.\ Phys.\ A: Math.\ Gen., \textbf{38}, pp.~869--880 (2005).

\bibitem{N-M} Manna M.\,A., Neveu A., A singular integrable equation from
  short capillary-gravity waves, arXiv:physics/0303085.

\bibitem{P} Popowicz Z., A2-Component or $N=2$ supersymmetric Camassa--Holm
  equation, Phys.\ Lett.\ A \textbf{354}, pp.~110--114, (2006),
  arXiv:nlin/0509050.

\bibitem{C-O-R} Ortenzi G., Pedroni M., Rubtsov V., On the higher Poisson
  structures of the Camassa--Holm hierarchy, this issue.

\bibitem{V} Vinogradov A.\,M., Cohomological Analysis of Partial Differential
  Equations and Secondary Calculus, vi+247~pp. Amer.\ Math.\ Soc., Providence,
  RI (2001).
\end{thebibliography}
\end{document}